\documentclass[11pt,english]{article}

\usepackage{amssymb,verbatim}
\usepackage{epsfig}
\usepackage{fancyhdr}
\usepackage{pdflscape}
\usepackage{xcolor}

\usepackage{booktabs}
\usepackage{enumerate}
\usepackage{array}
\usepackage{amstext}
\usepackage{amsthm}
\usepackage{amsopn}
\usepackage{amsfonts}
\usepackage{amsmath}
\usepackage{mathrsfs}
\usepackage{amsbsy}
\usepackage{multicol}
\usepackage{colortbl}
\usepackage{multirow}
\usepackage{fancyhdr}
\usepackage{appendix}
\usepackage{pdflscape}
\usepackage{natbib}
\setcitestyle{authoryear,round}
\usepackage[colorlinks,linkcolor=blue,citecolor=blue]{hyperref}
\usepackage{makecell}

\usepackage{xcolor}

\usepackage{babel}
\usepackage{threeparttable}
\usepackage[section]{placeins}

\renewcommand {\thetable}{\arabic{table}}

\usepackage{color}
\usepackage{soul}
\soulregister\cite7
\soulregister\ref7
\soulregister\citep7
\soulregister\citet7
\soulregister\caption7

\usepackage{setspace}
\renewcommand{\baselinestretch}{1.5}

%\usepackage{array}

%\pagestyle{plain}
%\fancyfoot[C]{\small\thepage}

%%%%%%%%%%%%%%%%%%%% %%%%%%%%%%%%%%%%%%%%%%%%%
\theoremstyle{plain}
\newtheorem{theorem}{Theorem}
\newtheorem{lemma}{Lemma}

\newtheorem{proposition}{Proposition}
%---------------------------------------------------------
%%%%%%%%%%%%%%%%%%%% boldface %%%%%%%%%%%%%%%%%%%%%%%%%
%%%%%%%%%%% the location of title about table and figure%%%%%
\usepackage{floatrow}
\floatsetup[table]{capposition=top}
\floatsetup[figure]{capposition=bottom}
\newfloatcommand{capbtabbox}{table}[][\FBwidth]

%%%------------------------------------------------------

%---------------------------------------------
\textheight 9.5 in \textwidth 6.5 in
\topmargin -0.6 in       % for PC
\oddsidemargin 0.0in
\parskip=.08in

\renewcommand{\baselinestretch} {1.3}
\makeatletter \setcounter{page}{1}
\def\singlespace{\def\baselinestretch{1}\@normalsize}

%%%%%%%%%%%%%%%%%%%%%%%%%%%%%%%%%%%%%%%%%%%%%%%

\setcounter{page}{1}
%\setcounter{chapter}{0}
%\chapter{???}
%\label{chap:Perface}
\usepackage{bm}
\usepackage{algpseudocode}
\usepackage{algorithmicx,algorithm}
\newtheorem{definition}{Definition}
\newtheorem{myAss}{Assumption}

\begin{document}
	%%%%%%%%%%%%%%%%%%%%%%%%%%%%%%%%%%%%%%%%%%%%%%%%%%%%%%%%%%%%
	%% 
	%%%%%%%%%%%%%%%%%%%%%%%%%%%%%%%%%%%%%%%%%%%%%%%%%%%%%%%%%%%%
	\title{Causal Mediation Analysis with a Three-Dimensional Image Mediator}
	\author{Minghao Chen and Yingchun Zhou$^{*}$\\
		\footnotesize Key Laboratory of Advanced Theory and Application in Statistics and Data Science-MOE,\\
		\footnotesize Institute of Brain and Education Innovation, School of Statistics,\\
		\footnotesize East China Normal University 3663 North Zhongshan Road, Shanghai, 200062, P.R. China}
	\date{}
	\maketitle
	\newcommand\blfootnote[1]{%
		\begingroup
		\renewcommand\thefootnote{}\footnote{#1}%
		\addtocounter{footnote}{-1}%
		\endgroup
	}
	\blfootnote{*Co-corresponding Authors: Yingchun Zhou, \textit{E-mail address:~yczhou@stat.ecnu.edu.cn}}
	
	\begin{abstract}
		Causal mediation analysis is increasingly abundant in biology, psychology, and epidemiology studies, etc. In particular, with the advent of the big data era, the issue of high-dimensional mediators is becoming more prevalent. In neuroscience, with the widespread application of magnetic resonance technology in the field of brain imaging, studies on image being a mediator emerged. In this study, a novel causal mediation analysis method with a three-dimensional image mediator is proposed. We define the average casual effects under the potential outcome framework, explore several sufficient conditions for the valid identification, and develop techniques for estimation and inference. To verify the effectiveness of the proposed method, a series of simulations under various scenarios is performed. Finally, the proposed method is applied to a study on the causal effect of mother's delivery mode on child's IQ development. It is found that the white matter in certain regions of the frontal-temporal areas has mediating effects. 
	\end{abstract}
	\textbf{Keywords:} {Causal inference; Mediation analysis; Structural equation model; Three-dimensional image data.}
	
	%%%%%%%%%%%%%%%%%%%%%%%%%%%%%%%%%%%%%%%%%%%%%%%%%%%%%%%%%%%%
	%% 
	%%%%%%%%%%%%%%%%%%%%%%%%%%%%%%%%%%%%%%%%%%%%%%%%%%%%%%%%%%%%
	\baselineskip18pt
	
	\section{INTRODUCTION}
	Decoding the cognitive functions of the brain is currently one of the most challenging scientific problems in the world \citep{casey2005imaging,Giedd2010structural}. Recently, there has been an increasing interest in augmenting this study with mediation analysis that determines whether the effect of an external exposure on the cognitive or behavioral outcome is mediated by some latent intermediate variables, and estimates these indirect effects \citep{caffo2008brain, lindquist2012functional, zhao2019granger,zhao2021multimodal}. This study presents a novel method that models the magnetic resonance imaging (MRI) data as a three-dimensional mediator to reveal the underlying causal mechanism.
	
	Mediation analysis was early developed in the psychology literature \citep[e.g.,][]{james1984mediators, baron1986moderator}, which frequently investigates the influences of the treatment on the mediator, the mediator on the outcome, and the treatment on the outcome via structural equation models (SEMs), with the model coefficients interpreted as direct and indirect effects. Later on, researches started to discuss the causality of mediation, that is, defining causal parameters of interest, establishing the identifiability assumptions for these parameters and constructing models to estimate or test the mediation effect simultaneously \citep[e.g.,][]{robins1992identifiability,pearl2001direct, rubin2004direct, albert2008mediation, imai2010identification, imai2010general, vanderweele2015explanation}. Causal mediation analysis has been widely used in many scientific areas, and a series of practical frameworks have been developed for different application scenarios. For example, \citet{vanderweele2014mediation} and \citet{daniel2015causal} considered the case of multiple intermediary variables. \citet{lindquist2012functional}, \citet{zhao2019granger} and \cite{jiang2023causal} discussed longitudinal or functional mediators. Furthermore, with the advent of big data era, high-dimensional mediators are becoming more common in causal mediation analysis, and have been studied recently \citep[e.g.,][]{derkach2019high, dai2020multiple, zhou2020estimation, liu2022large, zhang2021mediation,guo2022statistical}. In particular, the widespread application of MRI in the fields of psychology and biology appeals to establishing a framework of image causal mediation analysis (ICMA). However, studies considering image mediators in causal inference are rare.
	
	%Randomized experimental researches and observational studies are two important directions in causal inference. 
	To assess image mediation, we combine the potential outcome framework of \citet{rubin1974estimating} with linear image structural equation models (LISEMs), and then show that under certain assumptions, it is possible to obtain a valid estimate of the average causal effects of the mediator from the parameters of the LISEMs. Furthermore, in order to determine whether an image has a mediation effect and which loci of the image are transmitting this effect, a test procedure is provided. Although some researchers have considered different strategies for the identification of direct and indirect effects in the cases when interactions and nonlinearities are present \citep[e.g.,][]{avin2005identifiability,petersen2006estimation,imai2010general,vanderweele2015explanation,miles2020semiparametric}, in this paper we are only concerned with the type of additive models that are commonly used in the social sciences and leave other issues for future work.
	
	Additionally, similar to the studies involving high-dimensional variables in causal inference \citep[e.g.,][]{luo2017estimating, zhang2018impact, zhang2020quantile}, regularization for the image coefficient of the second model in the LISEMs are required. While image belongs to high-dimensional data, it is often highly structured, which makes the common dimension reduction methods such as the LASSO method \citep{tibshirani1996regression} less effective \citep[e.g.,][]{zhou2014regularized,wang2017generalized}. Different from two-dimensional images which are considered as matrices, the three-dimensional images in this study are treated as tensors. Thus the method proposed in a recent work considering the LISEMs with two-dimensional image variables \citep{yu2022mapping} is unsuitable for this study, since the tensor nuclear norm regularization approach is computationally intractable \citep{friedland2018nuclear}. To solve the tensor estimation problem in the second model of the LISEMs, the image coeﬀicient is regularized with low Tucker rank \citep{tucker1966some}. Other methods \citep[e.g.,][]{zhou2013tensor,llosa2022reduced} that are applicable to estimating tensors of various structures can also be used in the proposed ICMA method.
	%with low Tucker rank constraint to obtain a valid estimate, other regularization methods that fit image data may also be used.  
	%Such characteristic also profoundly influences the statistical inference \citep{chen2019inference}. 
	
	This paper is structured as follows. In Section 2, the proposed ICMA method is described, including definition of average casual effects, establishment of valid identifiability assumptions, and description of procedures for estimating parameters and making inference. In Section 3, a series of simulation studies is conducted to verify the performance of the proposed method. In Section 4, the ICMA method is applied to the real data, and some enlightening results are obtained.

	\section{METHODOLOGY}\label{sec:2}
	In this section, we integrate the potential outcome framework of \cite{rubin1974estimating} with SEMs to construct a framework of image causal mediation analysis. The notations are as follows. Let $\mathbb{Z} = \{Z_j\}_{j = 1, \cdots,J}$ denote the set of all pre-treatment confounding variables. Throughout this article, assume $Z_j$'s are all scalars and fully observed with $J \ll n$, where $n$ is the sample size. Let $X$ be the binary treatment variable (0 or 1 corresponding to two treatments). Let $Y$ be the continuous outcome. Denote $\bm{M}$ as the three-dimensional image mediator. In practice, this imaging data is often represented in the form of a three-dimensional tensor. Assume that the tensors in this study are of type $N_1 \times N_2 \times N_3 $, and $N = \prod\limits_{d = 1}^{3}N_d$ may be much larger than $n$. Additionally, let $\mathcal{Z}$, $\bm{\mathcal{M}}$ and $\mathcal{Y}$ denote the support of the distributions of $\mathbb{Z}$, $\bm{M}$ and $Y$, respectively.

	\subsection{Definition and Identification of the Causal Effects}\label{sec:2.1}
	In this section the causal effects of interest are defined and the assumptions for identifiability are established. First, following the potential outcome framework, the potential mediators and potential outcomes are defined as below. Let $\bm M_i(x_1, \cdots, x_n)$ be the potential mediator for unit $i$ under the general treatment status $\{X_l = x_l\}_{l = 1\cdots, n}$, $Y_i(\{x_l\}_{l = 1,\cdots,n}$, $\{\bm M_l(\{x_l\}_{l = 1,\cdots,n})\}_{l = 1,\cdots,n})$ represent the potential outcome for unit $i$ under $\{X_l = x_l\}_{l = 1, \cdots, n}$ and $\{\bm M_l(\{x_l\}_{l = 1,\cdots,n})\}_{l = 1, \cdots, n}$, $Y_i(\{x_l\}_{l = 1,\cdots,n}, \{\bm M_l(\{1-x_l\}_{l = 1,\cdots,n})\}_{l = 1,\cdots,n})$ be the potential outcome for unit $i$ under $\{X_l = x_l\}_{l = 1,\cdots,n}$ and $\{\bm M_l(\{1-x_l\}_{l = 1,\cdots,n})\}_{l = 1,\cdots,n}$, and $Y_i (\{x_l\}_{l = 1,\cdots,n}, \{\bm m_l\}_{l = 1,\cdots,n})$ denote the potential outcome for unit $i$ when $\{X_l = x_l\}_{l = 1,\cdots,n}$ and $\{\bm M_l = \bm m_l\}_{l = 1,\cdots,n}$.
	
	Then, following the literature in causal inference \citep{rubin1978bayesian,rubin1980randomization}, to simplify the models of potential values of mediator and outcome, the stable unit treatment value assumption (SUTVA) that there is no interference between individuals is assumed. The statement below is analogous to that in \citet{angrist1996identification}.
	\begin{myAss}[Stable Unit Treatment Value Assumption]\label{Ass:1}
		\quad \\ 
		$\begin{array}{ll}
			\multicolumn{2}{l}{\text{For each unit i :}}\\
			(a) &\text{If} \ x_i = x_i^{'}, \ \text{then} \ \bm M_i(x_1, \cdots,x_n) = \bm M_i(x_1^{'}, \cdots, x_n^{'}), \\
			&Y_i(\{x_l\}_{l = 1,\cdots,n}, \{\bm M_l(\{x_l\}_{l = 1,\cdots,n})\}_{l = 1,\cdots,n}) = Y_i(\{x_l^{'}\}_{l = 1,\cdots,n}, \{\bm M_l(\{x_l^{'}\}_{l = 1,\cdots,n})\}_{l = 1,\cdots,n}),\\
			&\text{and} \ Y_i(\{x_l\}_{l = 1,\cdots,n}, \{\bm M_l(\{1-x_l\}_{l = 1,\cdots,n})\}_{l = 1,\cdots,n}) = Y_i(\{x_l^{'}\}_{l = 1,\cdots,n}, \{\bm M_l(\{1-x_l^{'}\}_{l = 1,\cdots,n})\}_{l = 1,\cdots,n}).\\
			(b) &\text{If} \ x_i = x_i^{'} \ \text{and} \ \bm m_i = \bm m_i^{'}, \ \text{then} \ Y_i(\{x_l\}_{l = 1,\cdots,n}, \{\bm m_l\}_{l = 1,\cdots,n}) = Y_i(\{x_l^{'}\}_{l = 1,\cdots,n}, \{\bm m_l^{'}\}_{l = 1,\cdots,n}).
		\end{array}$
	\end{myAss}	
	
	SUTVA implies that potential outcomes for each person are unrelated to the treatment given to other individuals. Under SUTVA, let $\bm{M}_i(x)$ denote the potential value of the mediator for unit $i$ under the treatment status $X_i = x$, $Y_i(x,\bm{M}_i(x^{'}))$ represent the potential outcome for unit $i$ under $X_i = x$ and $\bm{M}_i(x^{'})$, and $Y_i (x, \bm m)$ be the potential outcome for unit $i$ when $X_i = x$ and $\bm M_i = \bm m$. Additionally, we make the the consistency assumption that are generally presupposed in causal inference literature as below.
    \begin{myAss}[Consistency Assumption]\label{Ass:2}
			\quad \\
			\text{For each unit i,} $\bm M_i = \bm{M}_i(X_i)$ \text{and} $Y_i = Y_i(X_i,\bm{M}_i(X_i))$.
	\end{myAss}
	
	Under SUTVA, the unit-level causal effects of $X$ on $\bm M$ and $Y$, respectively, for individual $i$ ($i \in \{1,\cdots,n\}$), are given by the expressions
	\begin{equation}
		\bm M_{i}(1)-\bm M_{i}(0),
	\end{equation}
	\begin{equation}\label{eq:2}
		Y_{i}\left(1, \bm{M}_{i}(1)\right)-Y_{i}\left(0, \bm{M}_{i}(0)\right).
	\end{equation}
	
	Then (\ref{eq:2}) can be decomposed as follows.
	\begin{equation}\label{eq:3}
		\begin{aligned}
			Y_{i}(1,&\left.\bm{M}_{i}(1)\right)-Y_{i}\left(0, \bm{M}_{i}(0)\right) \\
			=&\left\{Y_{i}(1, \bm{M}_{i}(0))-Y_{i}(0, \bm{M}_{i}(0))\right\}+\left\{Y_{i}(1, \bm{M}_{i}(1))-Y_{i}(1, \bm{M}_{i}(0))\right\} \\
			=&\left\{Y_{i}(1, \bm{M}_{i}(1))-Y_{i}(0, \bm{M}_{i}(1))\right\}+\left\{Y_{i}(0, \bm{M}_{i}(1))-Y_{i}(0, \bm{M}_{i}(0))\right\} .
		\end{aligned}
	\end{equation}
	
	In both decompositions, the first term represents a natural direct effect of $X$ on $Y$ of subject $i$, while the second represents a natural  indirect effect \citep{pearl2001direct}. 
	
	Averaging (\ref{eq:3}) over all subjects, one obtains
	\begin{equation}\label{eq:4}
		\begin{aligned}
			E&(Y(1,\bm{M}(1)) - Y(0,\bm{M}(0))) \\
			&= E(Y(1,\bm{M}(0)) - Y(0,\bm{M}(0))) + E(Y(1,\bm{M}(1)) - Y(1,\bm{M}(0)))\\
			&= E(Y(1,\bm{M}(1)) - Y(0,\bm{M}(1))) + E(Y(0,\bm{M}(1)) - Y(0,\bm{M}(0))).\\
		\end{aligned}
	\end{equation}
	
	Then the causal effects of interest are defined as follows.
	
	\begin{definition}[Average Total Effect]
		\label{Def:1}	
		\begin{equation}\label{eq:5}
			\tau:= E(Y(1,\bm M(1))-Y(0, \bm M(0))).
		\end{equation}
	\end{definition}
	\begin{definition}[Average Causal Direct Effect]
		\label{Def:2}
		%\end{definition}
		\begin{equation}\label{eq:6}
			\gamma(x):= E(Y(1,\bm M(x))) - E(Y(0,\bm M(x))), 
		\end{equation}	
		for $x = 0,1$.
	\end{definition}
	\begin{definition}[Average Causal Indirect Effect]
		\label{Def:3}
		%\end{definition}
		\begin{equation}\label{eq:7}
			\delta(x):= E(Y(x,\bm M(1))) - E(Y(x,\bm M(0))),
		\end{equation}
		for $x = 0,1$.
	\end{definition}
	
	In general, the average causal direct and indirect effects defined above are different from the average controlled effect of the treatment, that is, $E(Y(x,\bm m) - Y(x',\bm{m}))$ for $x \neq x^{'}$ and all $\bm{m} \in \bm{\mathcal{M}}$, and that of the mediator, that is, $E(Y(x,\bm m) - Y(x,\bm{m}^{'}))$ for $x = 0,1$ and $\bm{m} \neq \bm{m}^{'}$\citep{pearl2001direct,robins2003semantics}. Unlike the causal effects $\gamma(x)$ and $\delta(x)$, the controlled effects are defined in terms of specific values of the mediator, rather than its potential values.
	
	Since for individual $i$ only one of \{$\bm M_i(1)$, $Y_i(1, \bm M_i(1))$\} and \{$\bm M_i(0), Y_i(0, \bm M_i(0))$\} can be observed, the unit-level causal effects cannot be directly calculated. Nevertheless, it will be shown that the average causal effects defined above can be identified and validly estimated under certain assumptions.
	
	\begin{myAss}[Sequential ignorability]
		\label{Ass:3}
		\quad \\
		$\begin{array}{rl}
			(a)&\left\{Y_i(x^{'}, \bm m), \bm{M}_i(x)\right\} \perp X_i \mid \mathbb{Z}_i = z,\\
			(b)&Y_i(x^{'}, \bm m) \perp \bm{M}_i(x) \mid X_i = x, \mathbb{Z}_i = z\\
		\end{array}$\\
		for $x, x^{'}= 0,1$, and all $z \in \mathcal{Z}$, where it is also assumed that $0 < Pr(X = x\mid \mathbb{Z} = z)$ and $0 < Pr(\bm M(x) = m \mid X = x,\mathbb{Z} = z)$ for $x = 0,1$, and all $z \in \mathcal{Z}$ and $\bm{m} \in \bm{\mathcal{M}}$.
	\end{myAss}
	Assumption \ref{Ass:3} is called the sequential ignorability assumption, which has been carefully described and compared with other strong assumptions when the intermediary variable is a one-dimensional variable in \citet{imai2010identification}. In this study, this assumption is extended to the image mediator case. It is easy to show that $\gamma(x)$ and $\delta(x)$ are nonparametrically identified under Assumption \ref{Ass:3}. However, it is not easy to estimate them since estimating the conditional distribution of a tensor-valued random variable is difficult and the estimate often does not have a closed form expression, even if the entries of $\bm{M}$ are independently normally distributed \citep{hoff2011separable}. To this end, seeking sufficient conditions for validly estimating $\tau$, $\gamma(x)$ and $\delta(x)$ is necessary.
	%As \citet{imai2010identification} point out that this assumption has an important limitation, that is, assuming the absence of post-treatment confounders, which may not be credible in many applied settings. 
	
	\begin{myAss}\label{Ass:4}
		The observed variables of each individual $i \in \{1,\cdots,n\}$ follow the LISEMs, that is,
		\begin{equation}\label{eq:8}
			\bm{M}_i = \bm{\eta}_{1} + X_{i} * \bm{\alpha} + \sum\limits_{j=1}^{J} Z_{ij} * \bm{\Psi}_{j} + \bm{\varepsilon}_{i},
		\end{equation}
		\begin{equation}\label{eq:9}
			Y_{i} = \eta_{2} +X_{i} \gamma + \langle\bm{M}_{i},\bm{\beta}\rangle + \sum\limits_{j=1}^{J} Z_{ij} s_j + \epsilon_{i},
		\end{equation}
		where $E(\bm\varepsilon\mid X = x, \mathbb{Z}=z) = \bm 0$, $E(\epsilon \mid X=x, \bm M = \bm m, \mathbb Z = z) = 0$, $c*\bm{U} = \{c u_{p_1,p_2,p_3}\}^{1\leq p_d\leq N_d}_{d = 1,2,3}$ denotes the scalar multiplication between the scalar $c$ and tensor $\bm U = \{u_{p_1,p_2,p_3}\}^{1\leq p_d \leq N_d}_{d = 1,2,3}$, and $\langle\bm U,\bm V\rangle = \sum_{p_1,p_2,p_3} u_{p_1, p_2, p_3} v_{p_1,p_2,p_3}$ denotes the inner product between the tensors $\bm{U} =  \{u_{p_1,p_2,p_3}\}^{1\leq p_d \leq N_d}_{d = 1,2,3}$ and $\bm{V} =  \{v_{p_1,p_2,p_3}\}^{1\leq p_d \leq N_d}_{d = 1,2,3}$. 
	\end{myAss}
	
	Assumption \ref{Ass:4} extends the linear structural equation models in \citet{baron1986moderator} to LISEMs by mainly replacing $M_i$ with $\bm M_i$, and allows that the observed variables follow the LISEMs. Under Assumptions \ref{Ass:1}-\ref{Ass:4}, one can obtain the average causal effects from the parameters of the LISEMs, which is presented in the following theorem.
	
	\begin{theorem}\label{th:1}
		If Assumptions $\ref{Ass:1}$-$\ref{Ass:4}$ hold, then the causal effects $\tau$, $\gamma(x)$ and $\delta(x)$ can be expressed as
		\begin{equation*}
			\begin{array}{ll}
				&\tau = \gamma + \langle\bm{\beta},\bm \alpha\rangle,\\
				&\gamma(x) = \gamma,\\
				&\delta(x) = \langle\bm{\beta},\bm \alpha\rangle,
			\end{array}
		\end{equation*}
		for $x = 0,1$.
	\end{theorem}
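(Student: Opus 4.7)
The plan is to reduce each of $\tau$, $\gamma(x)$, $\delta(x)$ to the cross-world quantity $E(Y(x,\bm{M}(x')))$ for $(x,x')\in\{0,1\}^2$, identify this quantity nonparametrically from the observed-data distribution using Assumptions~\ref{Ass:1}--\ref{Ass:3}, and then substitute the LISEMs of Assumption~\ref{Ass:4} to obtain an expression that is linear in the structural coefficients. Definitions~\ref{Def:1}--\ref{Def:3} express the three target effects as differences of such cross-world expectations, so once a closed-form expression for $E(Y(x,\bm{M}(x')))$ is available, the conclusions will follow by subtraction.

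First I would establish the mediation formula
\begin{equation*}
E(Y(x,\bm{M}(x'))\mid \mathbb{Z}=z) = \int E(Y\mid X=x,\bm{M}=\bm{m},\mathbb{Z}=z)\, dF_{\bm{M}\mid X=x',\mathbb{Z}=z}(\bm{m}).
\end{equation*}
The derivation conditions on $\mathbb{Z}=z$, uses part (a) of Assumption~\ref{Ass:3} to exchange conditioning on $\mathbb{Z}$ alone with conditioning on $\{X=x',\mathbb{Z}\}$, uses part (b) to eliminate the potential-outcome conditioning event inside $E(Y(x,\bm{m})\mid\cdot)$, and uses the consistency assumption to convert $\bm{M}(x')$ into the observed $\bm{M}$ when $X=x'$. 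Next, substituting the LISEMs and invoking $E(\bm{\varepsilon}\mid X,\mathbb{Z})=\bm{0}$ and $E(\epsilon\mid X,\bm{M},\mathbb{Z})=0$, the inner conditional expectation equals $\eta_{2}+x\gamma+\langle\bm{m},\bm{\beta}\rangle+\sum_{j}z_{j}s_{j}$, and the conditional mean of $\bm{M}$ equals $\bm{\eta}_{1}+x'*\bm{\alpha}+\sum_{j}z_{j}*\bm{\Psi}_{j}$. Pulling $\bm{\beta}$ through the $\bm{m}$-integral by bilinearity of the inner product and then averaging over $\mathbb{Z}$ gives
\begin{equation*}
E(Y(x,\bm{M}(x'))) = C + x\gamma + x'\langle\bm{\alpha},\bm{\beta}\rangle,
\end{equation*}
where $C$ collects all terms not depending on $(x,x')$.

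The three stated formulas then fall out immediately: $\gamma(x)$ is the difference at $(1,x)$ minus $(0,x)$, which isolates $\gamma$; $\delta(x)$ is $(x,1)$ minus $(x,0)$, which isolates $\langle\bm{\alpha},\bm{\beta}\rangle$; and $\tau$ is $(1,1)$ minus $(0,0)$, yielding $\gamma+\langle\bm{\alpha},\bm{\beta}\rangle$. I expect the main obstacle to be the careful bookkeeping in the identification step, specifically the passage from the counterfactual event $\{\bm{M}(x')=\bm{m}\}$ to the observed event $\{\bm{M}=\bm{m}\}$ inside the conditional law of $Y(x,\bm{m})$: while this is routine for a scalar mediator, one must verify that Assumption~\ref{Ass:3} as formulated yields a well-defined conditional distribution for the tensor-valued $\bm{M}$ and still permits the needed exchange of conditioning events. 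Once the identification formula is in place, the remaining algebra is mechanical thanks to the additive and multilinear form of the LISEMs.
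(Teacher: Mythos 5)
Your proposal is correct and follows essentially the same route as the paper: the paper invokes the mediation formula of \citet{imai2010identification} under Assumptions \ref{Ass:1}--\ref{Ass:3} (which is exactly the identification step you derive explicitly from sequential ignorability and consistency) and then substitutes the LISEMs of Assumption \ref{Ass:4}, using $E(\bm\varepsilon\mid X,\mathbb{Z})=\bm 0$ and $E(\epsilon\mid X,\bm M,\mathbb{Z})=0$ to pull out $\gamma$ and $\langle\bm\alpha,\bm\beta\rangle$. The only cosmetic difference is that you work with the single cross-world quantity $E(Y(x,\bm M(x')))$ and take differences at the end, whereas the paper writes the mediation formulas for $\gamma(x)$ and $\delta(x)$ directly; the content is the same.
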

	See Appendix A for a proof. Theorem \ref{th:1} implies that there is no interaction between the treatment and the mediator, that is, $\gamma(0) = \gamma(1)$ and $\delta(0) = \delta(1)$. 
	
	Alternatively, inspired by the relationship between the average controlled causal effects and the average natural causal effects \citep[e.g.,][]{zheng2015causal}, we find some conditions that are weaker than Assumptions \ref{Ass:3} and \ref{Ass:4}, but still sufficient to identify $\tau$, $\gamma(x)$ and $\delta(x)$. These conditions are summarized in Assumption \ref{Ass:5}.
	
	\begin{myAss}\label{Ass:5}
		\quad \\
		$\begin{array}{rl}
			(a) &E\{\bm{M}(x) \mid X=x, \mathbb{Z} = z\} = E\{\bm{M}(x) \mid \mathbb{Z} = z\}, \\
			&E\{Y(x,\bm{M}(x)) \mid X = x, \bm{M}(x) = m, \mathbb{Z} = z\} = E\{Y(x, \bm{M}(x)) \mid \bm{M}(x) = m, \mathbb{Z} = z\}, \\
			(b)&E\{Y(x,\bm m) \mid \bm{M}(x') = \bm m, \mathbb{Z} = z\} = E\{Y(x,\bm m) \mid \mathbb{Z} = z\},\\
			(c) &0 < Pr(X = x\mid \mathbb{Z}_i = z) \ \text{and} \ 0 < p(\bm M(x) = m \mid \mathbb{Z} = z),\\
			(d)&E\{\bm{M}(x) \mid \mathbb{Z} = z\} = \bm{\eta}_1^{(c)} + x * \bm{\alpha}^{(c)}+ \sum\limits_{j=1}^{J}z_j * \bm{\Psi}_j^{(c)},\\
			&E\{Y(x,\bm m) \mid \mathbb{Z} = z\} = \eta_2^{(c)} + x\gamma^{(c)} + \langle\bm{m}, \bm{\beta}^{(c)}\rangle + \sum\limits_{j=1}^{J}z_j s_j^{(c)}
		\end{array}$\\
		for $x = 0,1$, and all $z \in \mathcal{Z}$ and $\bm m \in \mathcal{M}$.
	\end{myAss}
	
	Under Assumptions \ref{Ass:1} and \ref{Ass:5} (d), the average controlled direct effect of the treatment is defined as $E(Y(1, \bm m) - Y(0, \bm m)) = \gamma^{(c)}$ and the average controlled effect of $X$ on $\bm M$, and $\bm M$ on $Y$ at level $\bm m$ versus $\bm{m}^{\prime}$ can be written as $E(\bm{M}(1)-\bm{M}(0)) = \bm{\alpha}^{(c)}$, $E(Y(x, \bm m) - Y(x, \bm m^{\prime})) = \langle \bm{m} - \bm{m}^{\prime}, \bm{\beta}^{(c)}\rangle$, respectively. Further, under Assumptions \ref{Ass:1}, \ref{Ass:2} and \ref{Ass:5} (a)-(c), the parameters $\bm{\alpha}^{(c)} = \bm{\alpha}$, $\gamma^{(c)} = \gamma$ and $\bm{\beta}^{(c)} = \bm{\beta}$, which in general do not hold \citep{sobel2008identification,lindquist2012functional}. Then the average controlled causal effects can be expressed by the parameters of the LISEMs. It follows that $\tau$, $\gamma(x)$ and $\delta(x)$ can be validly identified under Assumptions \ref{Ass:1}, \ref{Ass:2} and \ref{Ass:5}. These results are summarized in Theorem \ref{th:2}.
	%Assumption \ref{Ass:4} can be derived from Assumption \ref{Ass:2}, but the opposite is not. This implies Assumption $\ref{Ass:4}$ is weaker than Assumption $\ref{Ass:2}$. Under Assumptions \ref{Ass:1} and \ref{Ass:4}, $\bm{\alpha}^{(c)} = \bm{\alpha}$, $\gamma^{(c)} = \gamma$, $\bm{\beta}^{(c)} = \bm{\beta}$, and the average causal controlled effects defined by (\ref{eq:a10}) and (\ref{eq:a11}) can be expressed by the parameters of the LISEMs. We summary these results in Theorem \ref{th:2}.
	\begin{theorem}\label{th:2}
		Under Assumptions $\ref{Ass:1}$, $\ref{Ass:2}$ and $\ref{Ass:5}$,
		\begin{equation*}
			\begin{aligned}
				&\bm{\alpha}^{(c)} = \bm{\alpha}, \ \gamma^{(c)} = \gamma, \ \bm{\beta}^{(c)} = \bm{\beta}, \ \text{and}\\
				&E(\bm M(1) - \bm M(0)) = \bm{\alpha},\\
				&E(Y(1, \bm m) - Y(0, \bm m)) = \gamma,\\
				&E(Y(x, \bm m) - Y(x, \bm m^{\prime})) = \langle \bm{m}-\bm{m}^{\prime}, \bm \beta \rangle,\\
				&\tau = \gamma + \langle\bm{\beta},\bm \alpha\rangle, \gamma(x) = \gamma, \delta(x) = \langle\bm{\beta},\bm \alpha\rangle,
			\end{aligned}
		\end{equation*}
		for $x = 0,1$, and all $\bm m \in \mathcal{M}$.
	\end{theorem}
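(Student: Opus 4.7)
The plan is to organize the proof in two stages: first, link the controlled-effect parameters $\bm\alpha^{(c)},\gamma^{(c)},\bm\beta^{(c)}$ from Assumption~\ref{Ass:5}(d) to the observable-regression coefficients $\bm\alpha,\gamma,\bm\beta$; second, compute the natural causal effects $\tau,\gamma(x),\delta(x)$ by integrating the linear parameterization in Assumption~\ref{Ass:5}(d) with respect to the potential-mediator distribution, removing any remaining conditioning via Assumption~\ref{Ass:5}(b). Throughout, consistency (Assumption~\ref{Ass:2}) is the bridge that lets us replace potential quantities by observed ones.

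For the first stage, I would start with the mediator equation. Consistency gives $E\{\bm M \mid X=x, \mathbb Z=z\} = E\{\bm M(x) \mid X=x, \mathbb Z=z\}$; Assumption~\ref{Ass:5}(a) removes the conditioning on $X$, yielding $E\{\bm M(x)\mid \mathbb Z=z\}$; Assumption~\ref{Ass:5}(d) then gives the linear form with coefficient $\bm\alpha^{(c)}$. Identifying this with the observable regression coefficient $\bm\alpha$ produces $\bm\alpha^{(c)}=\bm\alpha$. For the outcome, I would similarly write $E\{Y \mid X=x,\bm M=\bm m,\mathbb Z=z\}=E\{Y(x,\bm M(x))\mid X=x,\bm M(x)=\bm m,\mathbb Z=z\}$ by consistency, strip the conditioning on $X$ via Assumption~\ref{Ass:5}(a), use the event $\bm M(x)=\bm m$ to replace $Y(x,\bm M(x))$ by $Y(x,\bm m)$, apply Assumption~\ref{Ass:5}(b) with $x'=x$ to drop the conditioning on $\bm M(x)$, and finally invoke Assumption~\ref{Ass:5}(d) to obtain the linear expression with intercept $\eta_2^{(c)}$, slopes $\gamma^{(c)},\bm\beta^{(c)},s_j^{(c)}$, giving $\gamma^{(c)}=\gamma$ and $\bm\beta^{(c)}=\bm\beta$.

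For the second stage, the three controlled-effect identities are immediate: subtracting Assumption~\ref{Ass:5}(d) at $x=1$ and $x=0$ (then averaging over $\mathbb Z$) gives $E(\bm M(1)-\bm M(0))=\bm\alpha$; the mediator-free parameterization of $E\{Y(x,\bm m)\mid \mathbb Z=z\}$ directly yields $E(Y(1,\bm m)-Y(0,\bm m))=\gamma$ and $E(Y(x,\bm m)-Y(x,\bm m'))=\langle \bm m-\bm m',\bm\beta\rangle$. For the natural effects, I would evaluate $E(Y(x',\bm M(x)))$ by iterated expectation: condition on $\mathbb Z$ and $\bm M(x)$, apply Assumption~\ref{Ass:5}(b) to replace $E\{Y(x',\bm m)\mid \bm M(x)=\bm m,\mathbb Z=z\}$ by $E\{Y(x',\bm m)\mid \mathbb Z=z\}$, then use the linear form of Assumption~\ref{Ass:5}(d) together with the linear form of $E(\bm M(x)\mid \mathbb Z=z)$ to obtain
\[
E(Y(x',\bm M(x))\mid \mathbb Z=z)=\eta_2^{(c)}+x'\gamma^{(c)}+\langle \bm\eta_1^{(c)}+x*\bm\alpha^{(c)}+\textstyle\sum_j z_j*\bm\Psi_j^{(c)},\,\bm\beta^{(c)}\rangle+\textstyle\sum_j z_j s_j^{(c)}.
\]
Subtracting appropriately and averaging over $\mathbb Z$ then yields $\gamma(x)=\gamma^{(c)}=\gamma$, $\delta(x)=\langle\bm\alpha^{(c)},\bm\beta^{(c)}\rangle=\langle\bm\alpha,\bm\beta\rangle$, and $\tau=\gamma+\langle\bm\beta,\bm\alpha\rangle$ by the decomposition (\ref{eq:4}).

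The main obstacle I anticipate is the careful bookkeeping in the outcome step: Assumption~\ref{Ass:5}(b) is about conditioning $Y(x,\bm m)$ on the event $\bm M(x')=\bm m$, so one must match arguments correctly---using $x'=x$ when connecting to observed data (to combine with consistency) but allowing $x'\neq x$ when computing cross-world quantities like $E(Y(x,\bm M(1)))$. A second subtlety is verifying that the composite consistency $\bm M_i=\bm M_i(X_i)$ plus $Y_i=Y_i(X_i,\bm M_i(X_i))$ implies $Y(x,\bm M(x))=Y(x,\bm m)$ on the event $\{X=x,\bm M(x)=\bm m\}$, which is needed to swap the random-argument $Y(x,\bm M(x))$ for the fixed-argument $Y(x,\bm m)$ before invoking Assumption~\ref{Ass:5}(b). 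Once these two replacements are justified, the remainder reduces to linearity of expectation applied to the parametric forms in Assumption~\ref{Ass:5}(d).
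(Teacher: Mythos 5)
Your proposal is correct and follows essentially the same route as the paper: consistency plus Assumption~\ref{Ass:5}(a) to equate the observed conditional means with $E\{\bm M(x)\mid\mathbb Z=z\}$ and $E\{Y(x,\bm m)\mid\mathbb Z=z\}$, coefficient matching against the observable regressions to get $\bm\alpha^{(c)}=\bm\alpha$, $\gamma^{(c)}=\gamma$, $\bm\beta^{(c)}=\bm\beta$, and then the iterated-expectation computation of $E(Y(x',\bm M(x))\mid\mathbb Z=z)$ using Assumption~\ref{Ass:5}(b) and the linear forms in (d). The two subtleties you flag (matching $x'$ versus $x$ in \ref{Ass:5}(b), and swapping $Y(x,\bm M(x))$ for $Y(x,\bm m)$ on the conditioning event) are exactly the steps the paper's proof executes, so nothing further is needed.
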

	See Appendix A for the proof of Theorem \ref{th:2}, and the proof that Assumption $\ref{Ass:5}$ is weaker than Assumption $\ref{Ass:3}$ and $\ref{Ass:4}$.
	
	\subsection{Estimation and Inference}\label{sec:2.2}
	In this section an approach for estimating the causal parameters of interest in the LISEMs is described. Additionally, a procedure for testing the causal parameters based on the resampling techniques is provided.
	
	\subsubsection{Estimation}\label{sec:2.2.1}
	The LISEMs considered in this study are composed of two types of linear regression models. One consists of an image response and some scalar predictors, while the other consists of a scalar response, an image predictor, and some other scalar predictors. Since the second one is essentially a special high-dimensional linear model, it is assumed that the Tucker rank of the coefficient tensor $\bm{\beta}$ is low so as to obtain a valid estimate, as well as to enhance power of the subsequent tests. Note that the proposed method is flexible, so other regularization methods that fit image data may also be used. Meanwhile, for further statistical inference, not penalizing $\bm{\alpha}$, $\bm{\Psi}_j$ and $\tau$ leaves their estimators unbiased, and there is no need for conducting any of the debiased or decorrelated procedures \citep[e.g.,][]{zhang2014confidence,van2014asymptotically}. Our estimation strategy is as follows.
	
	For simplicity, the intercept terms is suppressed in the LISEMs, 
	\begin{equation}\label{eq:10}
		\bm{M}_{i} = X_{i} * \bm{\alpha} + \sum\limits_{j=1}^{J}Z_{ij} * \bm{\Psi}_j + \bm{\varepsilon}_{i},
	\end{equation}
	\begin{equation}\label{eq:11}	
		Y_{i} = X_{i} \gamma + \langle\bm{M}_{i},\bm{\beta} \rangle + \sum\limits_{j=1}^{J}Z_{ij} s_j + \epsilon_{i}.
	\end{equation}
	
	First, an unbiased estimate of $\bm \Delta = (\bm \alpha, \bm \Psi_1, \cdots, \bm \Psi_J)$ can be obtained via the least square method based on (\ref{eq:10}), that is, 
	\[
	\begin{array}{rl}
		\widehat{\bm \Delta} &= \underset{\bm \alpha,\bm \Psi_{j} \in \mathbb{R}^{N_{1} \times N_{2} \times N_{3}}}{\arg\min}{\sum\limits_{i = 1}^{n}\parallel\bm M_i - X_i * \bm \alpha - Z_{ij}*\bm{\Psi}_{j}\parallel_{F}^2}\\
		&=  \underset{\bm \alpha,\bm \Psi_{j} \in \mathbb{R}^{N_{1} \times N_{2} \times N_{3}}}{\arg\min}\parallel M  - A_{X,Z} \Delta \parallel_{2}^2,
	\end{array}
	\]
	where $|| \bm U ||_{F}^{2} = \sum_{p_1,p_2,p_3} u^2_{p_1,p_2,p_3}$ denotes the Frobenius norm, $M = \left(\mathrm{vec}(\bm M_1)^{T},\cdots,\mathrm{vec}(\bm M_n)^{T}\right)^{T}$ and $\mathrm{vec}(\bm U) = (u_{1,1,1}, u_{2,1,1},\cdots, u_{N_1,1,1}, u_{1,2,1},\cdots, u_{N_1,2,1}, \cdots, u_{N_1,N_2,N_3})^{T}$. Additionally, $A_{X,Z} = (A_X,A_{Z_1}, \cdots,A_{Z_J})$, $A_X = \left( X_1 I_{N}, \cdots, X_n I_{N} \right)^{T}$, $A_{Z_j} = \left( Z_{1j} I_{N}, \cdots, Z_{nj} I_{N} \right)^{T}$, $\Delta = (\alpha, \Psi_1, \cdots, \Psi_J)$, $\alpha = \mathrm{vec}(\bm \alpha)$, $\Psi_j= \mathrm{vec}(\bm \Psi_j)$, $j \in \{1,\cdots,J\}$.
	
	Next, $\bm \beta$ is estimated based on equation $(\ref{eq:11})$, which is a problem with rising interest in statistics. Specifically, inspired by recent work \citep[e.g.,][]{li2018tucker,ahmed2020tensor,han2022optimal,luo2021low}, assuming the parameter $\bm \beta$ to have a low Tucker rank, we use the Tucker decomposition to solve the tensor estimation problem. In this study, an estimate of $\gamma$ and a valid estimate of $\bm\beta$ are obtained via the method proposed by \cite{li2018tucker}, in which the tensor estimator of $\bm \beta$, actually $\mathrm{vec}(\bm \beta)$, has a normal distribution limit under certain conditions, such as $\epsilon$ is normal and the parameter space of $\bm \beta$ is compact and contains the true value. Moreover, this asymptotic property of $\widehat{\bm \beta}$ satisfies the requirement for the bootstrap step in the inference. See Appendix B for a brief algorithm and selection of Tucker order for $\bm{\beta}$.

	Finally, by substituting $(\ref{eq:10})$ into $(\ref{eq:11})$, one obtains that
	\begin{equation}\label{eq:12}
		Y_{i} = X_{i}\tau + \sum\limits_{j=1}^{J}Z_{ij} \theta_j + \xi_i,
	\end{equation}
	where $\tau = \gamma + \langle\bm{\beta} , \bm \alpha \rangle$ is the total effect of interest, $\theta_j = s_j + \langle \bm{\beta}, \bm{\Psi_j} \rangle$, $\xi_i = \langle\bm \beta, \bm{\varepsilon}_i\rangle + \epsilon_{i}$, and $E(\xi \mid X = x) = 0$. Next, an estimate of $\theta = (\tau,\theta_1, \cdots, \theta_J)^T$ via the ordinary least square method based on $(\ref{eq:12})$ is as follows.
	\[
	\begin{array}{lc}
		&\hat{\theta} = \underset{\theta \in \mathcal{R}^{J+1}}{\arg\min}{\sum_{i = 1}^{n}(Y_i - X_i \theta_0 - Z_{ij} \theta_j)^2}\\
		&\ \ = \underset{\theta \in \mathcal{R}^{J+1}}{\arg\min}{ \parallel Y - B_{X,Z}\theta  \parallel_{2}^2},
	\end{array}
	\]
	where $Y = (Y_1,\cdots,Y_n)^{T}$, $B_{X,Z} = (X, Z_1, \cdots, Z_J)$, $X = (X_1,\cdots,X_n)^{T}$ and $Z_j = (Z_{1j}, \cdots, Z_{nj})^T$.

	%{\color{red}Alternatively, by pulging in the estimates of $\bm\alpha$, $\delta$ and $\bm \beta$, one can obtain $\hat{\tau} = \hat \delta + \langle\widehat {\bm \alpha}, \widehat{\bm \beta}\rangle$ based on $(\ref{eq:4})$.}
	
	%Alternatively, by substituting $(\ref{eq:10})$ into $(\ref{eq:11})$, one obtains that
	%\begin{equation}\label{eq:12}
	%Y_{i} = X_{i}\theta_0 + \sum\limits_{j=1}^{J}Z_{ij} \theta_j + \xi_i,
	%\end{equation}
	%where {\color{red}$\theta_0 = \gamma + \langle\bm{\beta} , \bm \alpha \rangle= \tau$}, $\theta_j = s_j + \langle \bm{\beta}, \bm{\Psi_j} \rangle$, $\xi_i = <\bm \beta, \bm{\varepsilon}_i> + \epsilon_{i}$, and $E(\xi \mid X = x) = 0$.
	
	%Next, an estimate of $\theta = (\theta_0,\theta_1, \cdots, \theta_J)^T$ via the ordinary least square method based on $(\ref{eq:12})$ is as follows.
	%\[
	%\begin{array}{lc}
	%&\hat{\theta} = \underset{\theta \in \mathcal{R}^{J+1}}{\arg\min}{\sum_{i = 1}^{n}(Y_i - X_i \theta_0 - Z_{ij} \theta_j)^2}\\
	%&\ \ = \underset{\theta \in \mathcal{R}^{J+1}}{\arg\min}{ \parallel Y - B_{X,Z}\theta  \parallel_{2}^2},
	%\end{array}
	%\]
	%where $Y = (Y_1,\cdots,Y_n)^{T}$, $B_{X,Z} = (X, Z_1, \cdots, Z_J)$, $X = (X_1,\cdots,X_n)^{T}$ and $Z_j = (Z_{1j}, \cdots, Z_{nj})^T$.

	\subsubsection{Inference}\label{sec:2.2.2}
	Another central question of mediation analysis is to determine whether $\bm M$ acts as an intermediary to transmit the effect of $X$ on $Y$, and which loci in $\bm M$ transmit that effect. Analogous to the univariate setting \citep{sobel1982asymptotic, baron1986moderator, lindquist2012functional}, these can be done by testing whether $\langle\bm \alpha, \bm \beta \rangle$ differs from 0, and whether $\bm \alpha \circ \bm \beta$ (the Hadamard product \citep{kressner2017recompression} for $\bm \alpha$ and $\bm \beta$) differs from $\bm{0}$.  Since the exact distribution of $\bm \alpha\circ \bm \beta$ is unknown, bootstrap methods are used to perform inference. The procedure for obtaining the bootstrap distributions of $\bm \alpha$ and $\bm \beta$ is as follows.
	
	\begin{description}
		\item[Step 1.] Estimating the parameters of interest based on the LISEMs and the original data via the strategy proposed in Section \ref{sec:2.2.1}.
		\item[Step 2.] Independently generating a bootstrap sample $(\mathbb{Z}_i, X_i,\bm{M}_{i}^{\dag},Y_i^{\dag})$, $i \in \{1,\cdots,n\}$ based on LISEMs, that is,
		\[
		\bm{M}_{i}^{\dag}=\widehat{\bm \eta}_{1}+ X_{i} * \widehat{\bm{\alpha}} + \sum\limits_{j=1}^{J}Z_{ij} * \widehat{\bm{\Psi}}_j + v_{i}^{\dag} \widehat{\bm{\varepsilon}}_{i},
		\]
		\[
		Y_{i}^{\dag}=\hat{\eta}_{2}+X_{i} \hat{\delta}+\langle\widehat{\bm{\beta}}, \bm{M}_{i}^{\dag}\rangle+\sum\limits_{j=1}^{J}Z_{ij} \hat{s}_j +v_{i}^{\dag} \hat{\epsilon}_{i},
		\]
		where $\widehat{\bm{\varepsilon}}_{i} = \bm M_i - \widehat{\bm{\eta}}_{1}- X_{i} * \widehat{\bm{\alpha}} - \sum\limits_{j=1}^{J}Z_{ij} * \widehat{\bm{\Psi}}_j$, $\hat{\epsilon_{i}} =Y_i - \hat{\eta}_2 - X_{i}\widehat{\delta} - \langle\widehat{\bm{\beta}},\bm{M}_i\rangle-\sum\limits_{j=1}^{J}Z_{ij} \hat{s}_j$, $v_i^{\dag}$ are independent and identically distributed and $Pr(v_i^{\dag} = 1)=Pr(v_i^{\dag} = -1) = \frac{1}{2}$.
		\item[Step 3.] Estimating the parameters of interest again based on the LISEMs and the bootstrap samples generated in Step 2.
		\item[Step 4.] Repeat the procedure outlined in Steps 2 and 3 for a prespecified number of times (e.g., 500 times). These replications are used to compute the bootstrap distributions of $\bm \alpha$ and $ \bm \beta$.
	\end{description}
	%One can also compute percentile confidence `zone' for $\bm \alpha \circ \bm \beta$ via the bootstrap distribution \citep{Efron1994introduction}. 
	
	Next, based on the bootstrap distributions of $\bm \alpha$ and $ \bm \beta$, a MaxP test \citep{mackinnon2002comparison} can be performed to determine which $\alpha_{p_1,p_2,p_3}\beta_{p_1,p_2,p_3}$ is significantly different from 0, and the $p$-value for each locus $(p_1,p_2,p_3)$ equals to $\max\{2- 2\Phi(|\frac{\widehat{\alpha}_{p_1,p_2,p_3}}{\widehat{\sigma}_{\alpha_{p_1,p_2,p_3}}}|), 2- 2\Phi(|\frac{\widehat{\beta}_{p_1,p_2,p_3}}{\widehat{\sigma}_{\beta_{p_1,p_2,p_3}}}|)\}$, where $\Phi(\cdot)$ is the cumulative distribution function of the standard normal distribution, $\widehat{\sigma}_{\alpha_{p_1,p_2,p_3}}$ and $\widehat{\sigma}_{\beta_{p_1,p_2,p_3}}$ are the estimates of standard deviation calculated by the bootstrap distributions.
	
	As we are interested in testing whether $\bm \alpha \circ \bm \beta$ differs from $\bm{0}$ at each locus, it is necessary to adjust the $p$-values for multiple comparisons. Throughout the article, the Benjamini–Hochberg procedure \citep{benjamini1995controlling} is used to control the false discovery rate (FDR). 
	
	\section{SIMULATION}
	In this section, simulation studies are conducted and the performance of the ICMA method is shown in various scenarios. 
	
	The data $(Z_i, X_i, \bm{M}_i, Y_i)$, $i=1,\cdots,n$, $n=100$, $N_1 = N_2 = N_3 = 8$ are generated based on five causal diagrams (five scenarios) which commonly occur in causal mediation analysis. The first three scenarios do not have mediation effect, that is, the treatment has no effect on the mediator or the mediator has no effect on the outcome. The remaining two scenarios are set to have a mediation effect. The specific settings of the five scenarios are summarized in Table \ref{ta:1}.
	
	\begin{table}[htbp]
		\caption{The summary of simulation settings.}
		\begin{tabular*}{\columnwidth}{c|c|c|c|c}
			\hline 
			Scenario & $X$ & $\bm M$ & $Y$ & $\delta$-effect in each locus\\ 
			\hline  1 & $0 \ \text{or} \ 1$ & $Z_i*\bm\Psi + \bm{\varepsilon}_i$ & $X_i \gamma + Z_i s + \epsilon_{i}$ &None\\ 
			\hline  2 & $0 \ \text{or} \ 1$ & $X_i*\bm{\alpha} + Z_i*\bm\Psi  + \bm{\varepsilon}_i$ & $X_i \gamma + Z_i s + \epsilon_{i}$ &None\\ 
			\hline  3 & $0 \ \text{or} \ 1$ & $Z_i*\bm\Psi + \bm{\varepsilon}_i$ & $Z_i s + \langle \bm{M}_i, \bm{\beta} \rangle + \epsilon_{i}$ &None\\ 
			\hline  4 & $0 \ \text{or} \ 1$ & $X_i*\bm{\alpha} + Z_i*\bm\Psi + \bm{\varepsilon}_i$ & $Z_i s + \langle \bm{M}_i, \bm{\beta} \rangle + \epsilon_{i}$ &$\bm{\alpha} \circ \bm{\beta}$\\ 
			\hline  5 & $0 \ \text{or} \ 1$ &  $X_i*\bm{\alpha} + Z_i*\bm\Psi + \bm{\varepsilon}_i$ & $X_i \gamma + Z_i s + \langle \bm{M}_i, \bm{\beta} \rangle + \epsilon_{i}$ &$\bm{\alpha} \circ \bm{\beta}$\\ 
			\hline \multicolumn{5}{l}{\quad Here, $\gamma = s = 10$, $Z_i \sim \mathcal{U}(0,2)$, $X_i \sim \mathcal{B}(1,\pi_i)$, $\text{logit}(\pi_i) = 0.5 - 0.5Z_i$, $\mathrm{vec}(\bm{\varepsilon}_i) \sim \mathcal{N}(0,0.3^2I)$, } \\
			\multicolumn{5}{l}{\quad $\epsilon_i \sim \mathcal{N}(0,0.1^2)$. $\bm{\Psi}$, $\bm \alpha$ and $\bm \beta$ are shown in Figure \ref{fig:S.1}.}\\
			\hline
		\end{tabular*} 
		\label{ta:1}
	\end{table}
	Each of the five scenarios has 500 replications. For each replication, the LISEMs are fitted using the approach outlined in Section \ref{sec:2}. After estimation, a MaxP test based on the generated bootstrap sample is performed to determine which $\alpha_{p_1,p_2,p_3}\beta_{p_1,p_2,p_3}$ is significantly different from 0. The results are controlled for multiple comparisons using the Benjamini-Hochberg procedure ({\it q} = 0.05).
	
	\paragraph{Scenario 1} As shown in Table \ref{ta:1}, the data are generated assuming $ \bm M_i =Z_i*\bm\Psi + \bm \varepsilon_i $ and $Y_i = X_i \gamma + Z_i s + \epsilon_{i} $ for $i = 1, \cdots, n$, where $\mathrm{vec}(\bm \varepsilon_i) \sim \mathcal{N}(0,0.3^2I)$, $\gamma=10$, $Z_i \sim \mathcal{U}(0,2)$, $X_i \sim \mathcal{B}(1,\pi_i)$, $\text{logit}(\pi_i) = 0.5 - 0.5Z_i$, and $\epsilon_i \sim \mathcal{N}(0,0.1^2)$. In this scenario, $X$ has a significant direct effect on $Y$ but no indirect effect, that is, $\gamma = 10$ and $\bm{\alpha} = \bm{\beta} = \bm{0}$. The estimate of $\gamma$ is $\hat{\gamma} = 10.0005$ and its confident interval is $[9.9574,10.0437]$. The first two rows of Figure \ref{fig:1} show estimates of $\bm{\alpha}$, $\bm{\beta}$, and $\bm{\alpha} \circ\bm{\beta}$, together with 95\% confidence intervals of the corresponding straightened vectors. None of the effects appear to deviate significantly from 0 for all the loci. Figure \ref{fig:2}(a) shows the proportion of times that $\bm{\alpha} \circ \bm{\beta}$ is statistically significant (adjusted {\it p} <0.05) among the 500 replications for each locus. Clearly, the proportions for all loci fall well below 0.05, which shows that the estimation and testing for $\bm{\alpha} \circ \bm{\beta}$ perform well. Additionally, as shown in Figure \ref{fig:S.2}(a), even without multiple test correction, the proportions for all loci fall well below 0.05.
	
	\paragraph{Scenario 2} The data are generated in an analogous manner as in Scenario 1, except that $ \bm M_i = X_i*\bm{\alpha} + Z_i*\bm\Psi  + \bm{\varepsilon}_i $ for $i = 1, \cdots, n$, where $Z_i$, $X_i$, and $\bm \varepsilon_i$ are defined as above. Again, $X$ has a significant direct effect on $Y$ but no indirect effect, that is, $\gamma = 10$ and $\bm{\beta} = \bm{0}$. Note that in this case, $\bm \alpha \neq \bm{0}$, which is different from the setting of Scenario 1. The estimate of $\gamma$ is $\hat{\gamma} = 10.0004$ and its confident interval is $[9.7699,10.2309]$. The third and fourth rows of Figure \ref{fig:1} show estimates of $\bm{\alpha}$, $\bm{\beta}$, and $\bm{\alpha} \circ \bm{\beta}$, together with 95\% confidence intervals of the corresponding straightened vectors. The estimates coincide with the true values, with only $\bm \alpha$ being significantly different from $\bm 0$ in the loci corresponding to the elements in $\bm \alpha$ which are nonzero. Figure \ref{fig:2}(b) shows the proportion of times that $\bm{\alpha} \circ \bm{\beta}$ is statistically significant (adjusted {\it p}<0.05) among the 500 replications for each locus. Again, all loci fall well below 0.05. Additionally, in Figure \ref{fig:S.2}(b), the proportions for all loci fall well below 0.05 without multiple test correction. Note that the proportion of false positives increases and approaches 0.05 in the loci corresponding to the elements in $\bm \alpha$ which are nonzero.
	
	\paragraph{Scenario 3} The data are generated assuming $ \bm M_i =Z_i*\bm\Psi + \bm{\varepsilon}_i$ and $Y_i = Z_i s + \langle \bm{M}_i, \bm{\beta} \rangle + \epsilon_{i}$ for $i = 1, \cdots, n$, where $Z_i$ is defined as above, both $\mathrm{vec}(\bm{\varepsilon}_{i})$ and $\epsilon_{i}$ follow normal distributions as described before. In this scenario, $X$ has no direct or indirect effect on $Y$, that is, $\gamma = 0$ and $\bm{\alpha} = \bm{0}$. The estimate of $\gamma$ is $\hat{\gamma} = -0.00001$ and its confidence interval is $[-0.0385, 0.0385]$. The fifth and sixth rows of Figure \ref{fig:1} show estimates of $\bm{\alpha}$, $\bm{\beta}$, and $\bm{\alpha} \circ \bm{\beta}$, together with 95\% confidence intervals of the corresponding straightened vectors. Again, the estimates coincide with the true values, with $\bm \beta$ being significantly different from $\bm 0$ in the loci corresponding to the elements that $\bm \beta$ are nonzero. Figure \ref{fig:2}(c) shows the proportion of times that $\bm{\alpha} \circ \bm{\beta}$ is statistically significant (adjusted {\it p}<0.05) among the 500 replications for each locus. Again, for all loci the proportion falls below 0.05. However, different from the first two scenarios, in this case without multiple test correction, the proportion of false positives increases and exceeds 0.05 (still below 0.1) in the loci corresponding to the elements in $\bm \beta$ which are nonzero, which is presented in Figure \ref{fig:S.2}(c).
	
	\begin{figure}[htbp]
		\centering
		\includegraphics[width=0.85\linewidth]{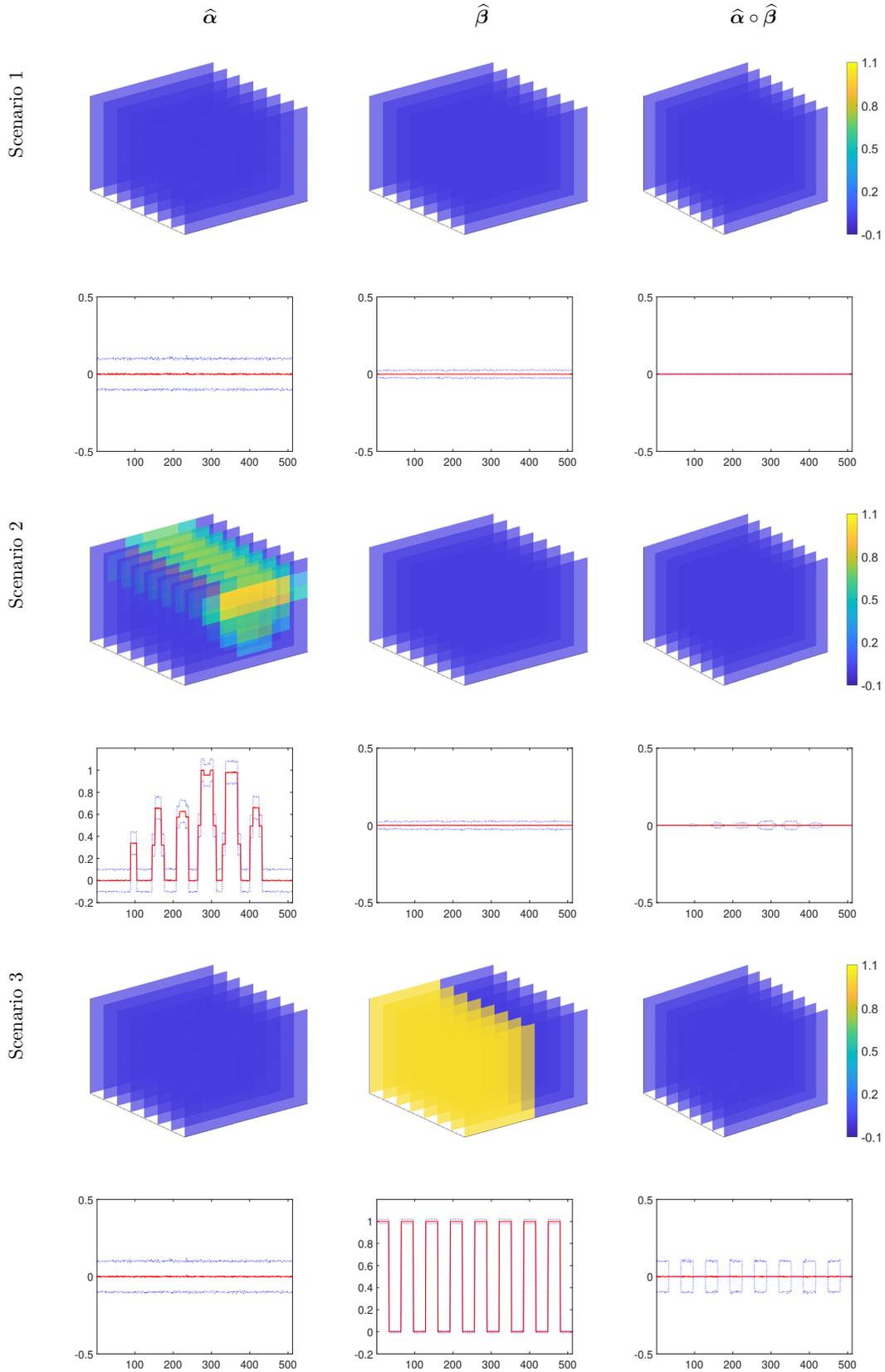}
		\caption{Estimates of $\bm{\alpha}$, $\bm{\beta}$, and $\bm{\alpha} \circ \bm{\beta}$, together with 95\% confidence intervals of the corresponding straightened vectors, for data generated according to the settings described in the Scenarios 1, 2, and 3. The $x$-axis of rows 2, 4 and 6 all represents the straightened loci, and the $y$-axis represents the values of the corresponding parameters at these loci. The colors in the images represent the grid point values.
		\label{fig:1}}
	\end{figure}
	\begin{figure}[htbp]
		\centering
		\includegraphics[width=1\linewidth]{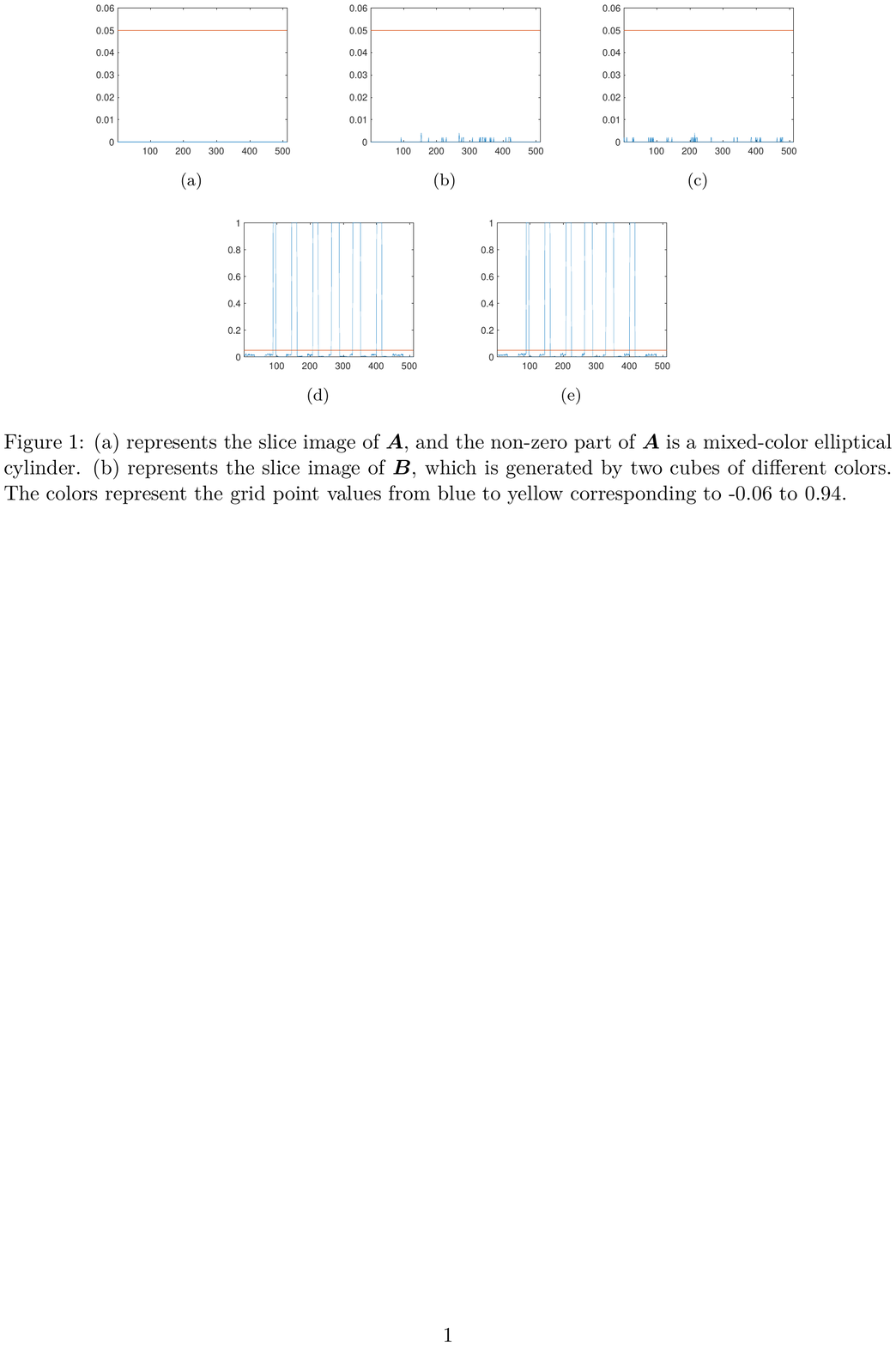}
		\caption{Results of the first three scenarios are shown in (a), (b), and (c), respectively. Each plot shows the proportion of times that $\bm{\alpha} \circ \bm{\beta}$ is statistically significant (adjusted ${\it p}<0.05$) among the 500 replications for each locus. The results illustrate that the method provides adequate control of the false positive rate in all three simulated scenarios. Results of the last two scenarios are shown in (d) and (e), respectively. Each plot illustrates the power of the method in detecting true positives in the loci of $\alpha_{p_1,p_2,p_3} \beta_{p_1,p_2,p_3} \neq 0$, while accurately controlling for false positives in the loci of $\alpha_{p_1,p_2,p_3}\beta_{p_1,p_2,p_3} = 0$. The $x$-axis of all plots represents the straightened loci.}
		\label{fig:2}
	\end{figure}
	
	\paragraph{Scenario 4} The data are generated in an analogous manner as described in Scenario 3, except that $ \bm M_i = X_i*\bm{\alpha} + Z_i*\bm\Psi + \bm{\varepsilon}_i$ for $i = 1, \cdots, n$, where $Z_i$, $X_i$, and $ \mathrm{vec}(\bm{\varepsilon}_{i}) $ are defined as above. In this scenario, $X$ has an indirect effect on $Y$ but no significant direct effect, and some loci in $\bm M$ transmit the effect of $X$ on $Y$, that is, $\gamma = 0$ and $\bm{\alpha} \circ \bm{\beta} \neq \bm{0}$. The estimate of $\gamma$ is $\hat{\gamma} = 0.0712$ and its confidence interval is $[-0.5754, 0.7179]$. The first two rows of Figure \ref{fig:3} show estimates of $\bm{\alpha}$, $\bm{\beta}$ and $\bm{\alpha} \circ\bm{\beta}$, together with 95\% confidence intervals of the corresponding straightened vectors. Again, all three estimates coincide with the true values, as they are significantly different from 0 in the appropriate loci. Figure \ref{fig:2}(d) illustrates the power of the method in detecting true positives in the loci where $\alpha_{p_1,p_2,p_3} \beta_{p_1,p_2,p_3} \neq 0$, while accurately controlling for false positives in the loci where $\alpha_{p_1,p_2,p_3}\beta_{p_1,p_2,p_3} = 0$.  Additionally, the slight difference between Figure \ref{fig:2}(d) and Figure \ref{fig:S.2}(d) is that the latter appropriately controls for false positives in the loci where $\alpha_{p_1,p_2,p_3}\beta_{p_1,p_2,p_3} = 0$ without multiple test correction.
	
	\paragraph{Scenario 5} The data are generated in an analogous manner as described in Scenario 4, except that $Y_i = X_i \gamma + Z_i s + \langle \bm{M}_i, \bm{\beta} \rangle + \epsilon_{i}$ for $i = 1, \cdots, n$, where $\gamma = 10$, $Z_i \sim \mathcal{U}(0,2)$, $\text{logit}(\pi_i) = 0.5 - 0.5Z_i$, $X_i \sim \mathcal{B}(1,\pi_i)$, and $\epsilon_i \sim \mathcal{N}(0,0.1^2)$. In this scenario, $X$ has both direct and indirect effect on $Y$, and some loci in $\bm M$ transmit the effect of $X$ on $Y$, that is, $\gamma = 10$ and $\bm{\alpha} \circ\bm{\beta} \neq \bm{0}$. The estimate of $\gamma$ is $\hat{\gamma} = 10.0033$ and its confident interval is $[9.3608, 10.6457]$. The third and fourth rows of Figure \ref{fig:3} show estimates of $\bm{\alpha}$, $\bm{\beta}$ and $\bm{\alpha}\circ\bm{\beta}$, together with 95\% confidence intervals of the corresponding straightened vectors. Similarly, all three estimates coincide with the true values, as they are significantly different from 0 in the appropriate loci. Figure \ref{fig:2}(e) illustrates the power of the method in detecting true positives in the intervals that $\alpha_{p_1,p_2,p_3} \beta_{p_1,p_2,p_3} \neq 0$, while accurately controlling for false positives in the intervals that $\alpha_{p_1,p_2,p_3}\beta_{p_1,p_2,p_3} = 0$. Additionally, the difference between Figure \ref{fig:2}(e) and Figure \ref{fig:S.2}(e) is that the latter appropriately controls for false positives in the loci where $\alpha_{p_1,p_2,p_3}\beta_{p_1,p_2,p_3} = 0$ without multiple test correction.
	\begin{figure}
		\centering
		\includegraphics[width=0.85\linewidth]{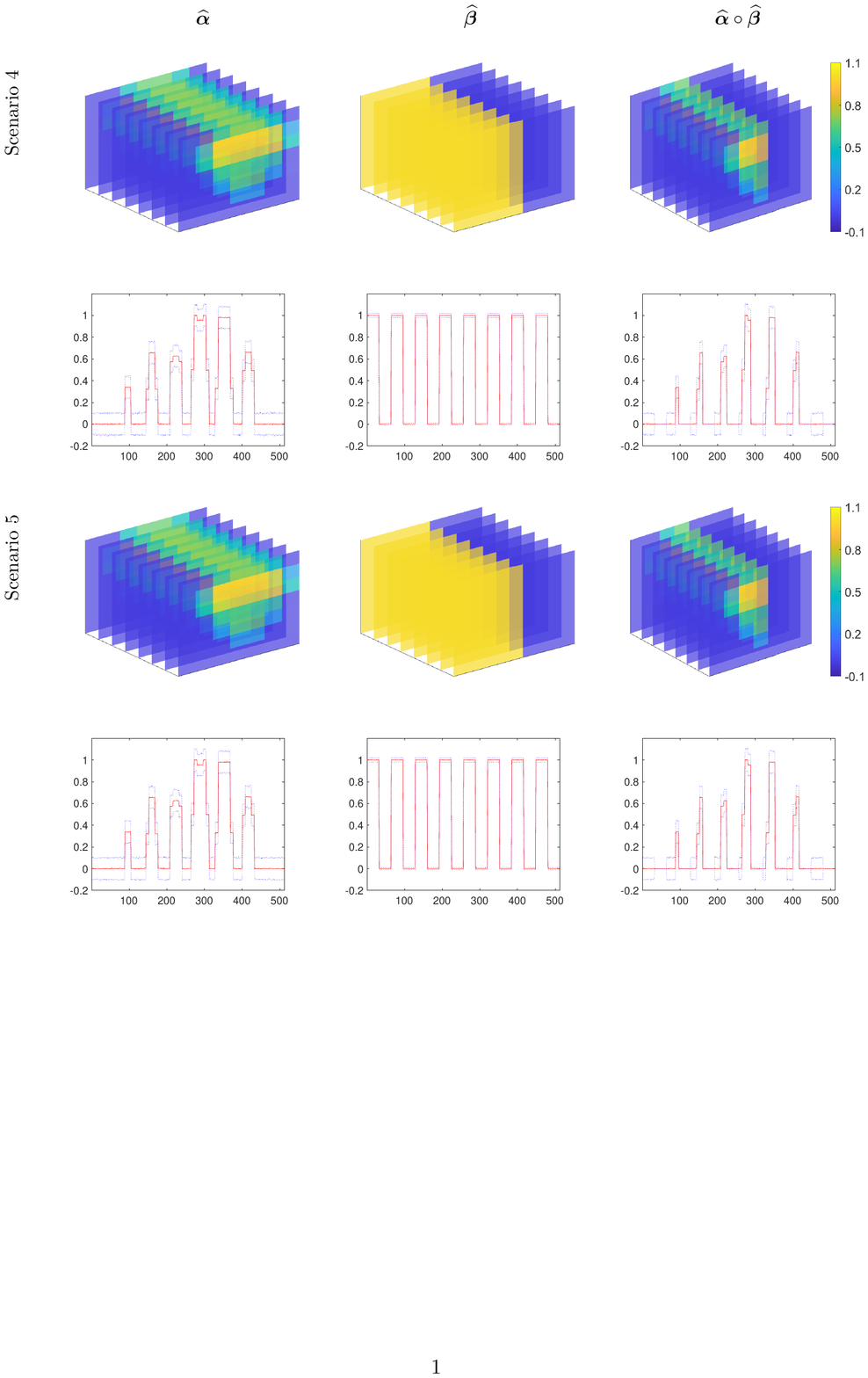}
		\caption{Estimates of $\bm{\alpha}$, $\bm{\beta}$, and $\bm{\alpha} \circ \bm{\beta}$, together with 95\% confidence intervals of the corresponding straightened vectors, for data generated according to the settings described in Scenarios 4 and 5. The $x$-axis represents the straightened loci, and the $y$-axis represents the values of the corresponding parameters at these loci. The colors in the images represent the grid point values.}
		\label{fig:3}
	\end{figure}
	
	\section{REAL DATA APPLICATION}\label{sec:4}
	Extensive scientific research have shown that the brain, as the carrier of human consciousness, profoundly affects the performance of human intelligence \citep[e.g.,][]{reiss1996brain,haier2004structural,northam2011total,liu2019neuronal}.
	
	Recently, a systematic review and meta-analysis by \citet{zhang2019association} indicates that children delivered through caesarean section are associated with an increased risk of negative performance in physical health and neurocognitive development, compared with children born via vaginal delivery. \citet{blazkova2020impact} shows that children born with caesarean section score lower in psychological cognitive tests than those born via vaginal delivery. However, an earlier study \citep{li2011cesarean} on Chinese children finds no difference in intelligence quotient (IQ) scores between children who were delivered by caesarean section and spontaneous vaginal delivery. A more recent systematic review \citep{blake2021association} concludes that the evidence of an association between caesarean section birth and lower offspring cognitive functioning is inconsistent. 
	
	Motivated by the controversial studies, this article considers two issues. The first is whether mother's delivery mode has an impact on child's intellectual performance, and the second is if this impact exists, whether it is mediated by the brain and which loci transmit the impact. These two issues can be tackled with the proposed ICMA method through estimating the causal parameters $\bm\alpha$, $\gamma$, $\bm\beta$, and $\tau$.
	
	The real data are collected by the Institute of Brain Science, East China Normal University. The original dataset contains each child's family status, physical and mental status, sleep duration and quality, parental characteristics, magnetic resonance brain images, and various test scores for 100 six-year-old children. Through prior studies that construct causal diagrams to explore relationship among variables \citep[e.g.,][]{andrews2018scoring}, seven variables are considered in the image casual mediation analysis, in which {\it delivery mode} is the treatment, {\it cerebrospinal fluid's $($CSF's$)$ MRI} or {\it gray matter's MRI} or {\it white matter's MRI} is the mediator, {\it child's IQ score} is the outcome and {\it gender, maternal age, oxytocin}, and {\it family type} are the pre-treatment covariates. Detailed description of these variables are shown in Table \ref{ta:2}.
	
	\begin{table}[h]
		\centering
		\caption{Description of variables in real data analysis.}
		\begin{tabular}{m{2.5cm}<{\centering}m{0.5cm}<{\centering}m{3.8cm}<{\centering}m{5.4cm}<{\raggedright}m{2cm}<{\centering}}
			\toprule
			\multicolumn{2}{c}{Variable} & Name & Value & Sample size  \\
			\midrule
			\multirow{2}{2.5cm}{\centering Treatment} &\multirow{2}{0.5cm}{\centering $X$}& \multirow{2}*{Delivery mode}& $1$: Caesarean section & \multirow{2}*{93}\\ 
			~&~&~&$0$: Spontaneous vaginal delivery&\\
			\midrule
			\multirow{3}{2cm}{\centering Mediator}&\multirow{3}{0.5cm}{\centering$\bm M$} & CSF's MRI & \multirow{3}*{$\mathcal{R}^{N_1 \times N_2 \times N_3}$}&\multirow{3}*{93} \\ 
			~&~&Gray matter's MRI&~&\\
			~&~&White matter's MRI&~&\\
			\midrule
			{\centering Outcome}&{\centering$Y$} & {\centering IQ score}  &{\centering$(0,\infty)$}& {\centering93}\\ 
			\midrule
			\multirow{8}{2.5cm}{\centering Pre-treatment covariates}&$Z_1$ & \multirow{2}*{\centering Child's gender} &$1$: Boy & \multirow{2}*{\centering 93}\\ 
			~&~&~&$0$: Girl&\\
			~&$Z_2$&{\centering Maternal age}&{\centering$(0,\infty)$}&{\centering93}\\ 
			~&\multirow{2}*{\centering$Z_3$}& \multirow{2}*{\centering Oxytocin}&$1$: Yes& \multirow{2}*{93}\\ 
			~&~& ~&$0$: No&\\
			~&\multirow{3}*{$Z_{41}$}& \multirow{3}*{Family type-1}&$1$: Family only consisting of& \multirow{2}*{93}\\ 
			~&~&~&\ \quad children and their parents&\\
			~&~&~&$0$: Others &\\
			~&\multirow{3}*{$Z_{42}$}& \multirow{3}*{Family type-2}&$1$: Family consisting of children,& \multirow{2}*{93}\\ 
			~&~&~&\ \quad parents and grandparents&\\
			~&~&~&$0$: Others &\\
			%\hline 
			%Confounder $Z_2$ & Mother's income &$(0,\infty)$& 89\\
			\midrule
			~& \textbf{1}.&\multicolumn{3}{l}{The MRIs here are all pre-processed, which are the output images}\\
			~&~ & \multicolumn{3}{l}{of the raw MRIs after four steps' standardization processing by FSL}\\
			~& ~&\multicolumn{3}{l}{Programs (bet2 (first step), FAST (third step)) and AFNI Programs}\\
			~& ~&\multicolumn{3}{l}{ (auto\_warp.py (second step), 3dNwarpApply (fourth step)) based on}\\
			Note:& ~&\multicolumn{3}{l}{the brain image in Figure \ref{fig:S.2}.}\\
			~& \textbf{2}.&\multicolumn{3}{l}{The MRI data of each individual is divided into three classes}\\
			~& ~&\multicolumn{3}{l}{based on CSF, Gray matter and White matter. The value of each locus}\\
			~& ~&\multicolumn{3}{l}{in each class of image is the percentage of that class of tissue}\\
			~& ~&\multicolumn{3}{l}{present at that locus.}\\
			\bottomrule
		\end{tabular} 
		\label{ta:2}
	\end{table}
	
	Our goal is to determine whether mother’s delivery mode has an impact on child’s intellectual performance. Assumptions that are made in Section \ref{sec:2} need to be assessed to qualify the plausibility of the ICMA approach in this application. For Assumption \ref{Ass:1}, in our real data neither the MRI nor the IQ test score of an individual subject should be affected by the treatment of any other subject in the study, thus validating the SUTVA. Assumption \ref{Ass:2} corresponds to the consistency assumption in causal inference, and it is widely used in causal mediation analysis \citep[e.g.,][]{imai2010general,imai2010identification}. Assumptions \ref{Ass:3} (a) and \ref{Ass:5}(a) correspond to the conditional independence and conditional mean independence between treatment and potential outcomes and potential mediators. The data set we use contains a large number of covariates which include variables from all related areas. By constructing causal diagrams to explore the relationship among variables, we found that only two variables (family type, maternal age) affect treatment, and included them as pre-treatment variables in our mediation analysis. So the conditional independence and conditional mean independence can be reasonably assumed. For Assumptions 3 (b) and 5(b), since $\widehat{\bm{\varepsilon}}$ and $\hat\epsilon$ being correlated is one of the most common violations of them, which correspond to the presence of unobserved variables that confound the relationship between $\bm M$ and $Y$, a ball covariance test of independence proposed by \cite{pan2020ball} is peformed to show the uncorrelation between $\widehat{\bm\varepsilon}$ and $\hat{\epsilon}$. Since linear additive models, which correspond to Assumptions 4 and 5 (d), are commonly used in the analysis of neuroimaging data \citep[e.g.,][]{zhou2013tensor,li2018tucker,yu2022mapping,llosa2022reduced}, their usage in this particular application appears reasonable. Additionally, the Kolmogorov-Smirnov statistic is used to test whether $\hat{\epsilon}$ is normal, which is to support the bootstrap step in Section \ref{sec:2.2.2}.
	
	To achieve our goal, model $(\ref{eq:12})$ is fitted and the T-test is peformed. The results show that caesarean section has a statistically significant negative impact on child’s intellectual performance since the estimate of total effect is $-4.5481$ with $p$-value $0.02$ (less than $0.05$). Next, we explore whether this effect is mediated and which loci transmit it by applying the ICMA method to each of the three types of image mediators, respectively. Since the original image data is of dimension $144 \times 192 \times 160$, which is much larger than the sample size $n=93$,  directly applying the ICMA method to the original image data may cause computational difficulties. Moreover, in the theories of tensor regression that under the assumption that the intrinsic tensor rank is known, the order of $\bm{M}$ is restricted by the sample size to achieve the corresponding estimation error in various methods \citep[e.g.,][]{han2022optimal,chen2019non, tomioka2013convex}. Thus, the mediation analysis for the real data here is performed in two steps.
	
	In the first step, we aim to select some potential regions with moderate size. To this end, the original image is transformed into a $\widetilde{N}_1 \times \widetilde{N}_2 \times \widetilde{N}_3$ tensor via average pooling or max pooling, that is, the value of each locus of the new tensor is the average value or max value of the corresponding $\widetilde{\widetilde{N}}_1 \times \widetilde{\widetilde{N}_2} \times \widetilde{\widetilde{N}}_3$ cubic in the original image, see Figure \ref{fig:S.3} for details, where $\widetilde{\widetilde{N}}_d = N_d/\widetilde{N}_d$, for $d = 1,2,3$. Specifically, consider three settings of $(\widetilde{N}_1,\widetilde{N}_2,\widetilde{N}_3)$: $(9,12,10)$, $(16,16,16)$, and $(18,24,20)$. Under each setting the average pooling and max pooling methods are applied, respectively. Next, the ICMA method is applied with the $\widetilde{N}_1 \times \widetilde{N}_2 \times \widetilde{N}_3$ tensor as a `pooled' image mediator to select the potential regions. Specifically, the regions with $p$-values less than 0.2 are taken as the potential regions and proceed to the second step. Additionally, see Appendix C for a discussion about the sensitivity of different settings of $\widetilde{N}_1 \times \widetilde{N}_2 \times \widetilde{N}_3$ and pooling methods to the potential regions selection.

To further explore the indirect effect through the MRI, the second step applies the ICMA method with each potential region that are selected by the first step being an image mediator. It is found that two regions of the frontotemporal area in the white matter have at least 5\% of loci conveying statistically significant mediation effects after adjusting for $p$-values. The results are summarized in Table \ref{ta:3} and Figure \ref{fig:4}. Additionally, the estimate of the direct effect is negative with its $p$-value, calculated by $2- 2\Phi(\frac{\hat{\delta}}{\widehat{\sigma}_{\delta}})$ where $\widehat{\sigma}_{\delta}$ is obtained similarly to $\widehat{\sigma}_{\alpha_{p_1,p_2,p_3}}$, less than 0.05 when each one of the two regions in Figure \ref{fig:4} is considered as an image mediator, respectively.

%The results show that loci in some regions in the white matter have significant mediation effects after adjusting for $p$-values. The results are summarized in Table \ref{ta:4} and Figure \ref{fig:4}. 
%Specifically, the regions with unadjusted $p$-value less than 0.2 in the first part of the analysis are selected as potential mediators.
\begin{table}[h]
	\centering
	\caption{Results of the second step.}
	\label{ta:3}
	\begin{tabular}{m{3cm}<{\centering}m{12cm}<{\centering}}
			\toprule
			Tissue category & Number of regions with at least 5\% of loci conveying statistically significant effects  \\ 
			\midrule
			CSF	& 0\\ 
			\midrule
			Gray matter & 0\\
			\midrule
			White matter & 2\\ 
			\bottomrule
	\end{tabular}
\end{table}

\begin{figure}[h]
	\centering
	\includegraphics[width=0.6\linewidth]{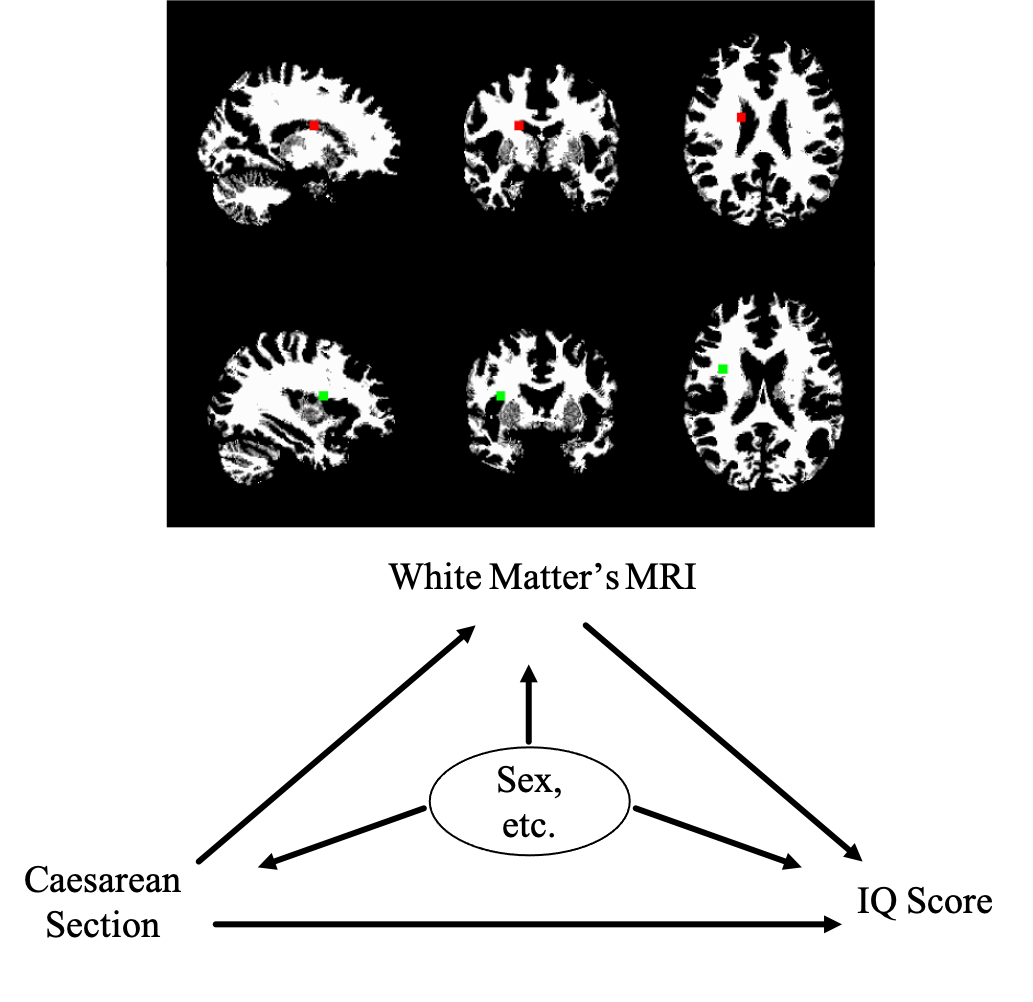}
	\caption{The causal diagram with the white matter's MRI as the mediator. The first row of the image plot shows a slice map of the red region, which is the locus in the $18 \times 24 \times 20$ mediator obtained by average pooling with at least 5\% of loci conveying statistically significant indirect effects in the second step. The second row shows a slice map of the green region, which is the locus obtained by max pooling with at least 5\% of loci conveying statistically significant indirect effects in the second step. Observe that these two regions are located very close to each other and both belong to the frontotemporal area.}
	\label{fig:4}
\end{figure}

A body of literature has detected the influences of the gray and white matter on the development of child's IQ \citep[e.g.,][]{reiss1996brain,wilke2003bright,northam2011total,isaacs2010impact,luby2016breastfeeding}.  The results of our real data analysis indicate that caesarean section has a negative direct effect on child's IQ, and suggest that the two regions of white matter belonging to the frontal-temporal areas in Figure \ref{fig:4} convey indirect effects from the caesarean section to child's IQ development. These findings help address the controversial issues described earlier in this section. Additionally, since limited information is available surrounding the role of white matter in shaping cognitive abilities in children \citep{muetzel2015white}, our findings provide inspiration for future prospective and longitudinal studies.

\section{Discussion}
This article introduces the ICMA method to study image mediation effects. It defines the average casual effects under the potential outcome framework, examines suﬀicient conditions for the valid identification, and develops techniques for estimation and inference. Different from most brain mediation analysis literature that treat summary measures of brain regions as mediator, in this study the image data is directly modeled as the mediator, and pre-treatment covariates are considered.

Note that the LISEMs in this article assumes linearity. However, they can be easily extended to the generalized linear form, that is,
\begin{equation}\label{eq:16}
	\begin{array}{ll}
		\bm{\mathcal A}_1(\mu_{\bm{M} \mid X = x, \mathbb{Z} = z}) = \bm{\eta}_{1} + x* \bm{\alpha} + \sum\limits_{j=1}^{J} z_j * \bm{\Psi}_j,\\
		\mathcal{A}_2(\mu_{Y \mid X = x, \bm{M} = \bm m, \mathbb{Z} = z}) = \eta_2 + x\gamma + \langle\bm{m},\bm{\beta}\rangle + \sum\limits_{j=1}^{J} z_j s_j,
	\end{array}
\end{equation}
where $\bm{\mathcal A}_1$ and $\mathcal{A}_2$ are known link functions, $\mu_{\bm{M} \mid X = x, \mathbb{Z} = z} = E(\bm{M} \mid X = x, \mathbb{Z} = z)$, and $\mu_{Y \mid X = x, \bm{M} = \bm m, \mathbb{Z} = z} = E(Y \mid X = x, \bm{M} = \bm m, \mathbb{Z} = z)$. In this formulation, under Assumption \ref{Ass:1}-\ref{Ass:3}, the total effect $\tau$ can be expressed as
\begin{equation*}
	\begin{aligned}
		\tau &= \gamma(x) + \delta(1-x)\\
		&=  \int\limits_{\mathcal{Z}}\int\limits_{\mathcal{M}}\left\{E(Y \mid X = 1, \bm M = \bm m, \mathbb{Z}=z) - E(Y \mid X = 0, \bm M=\bm m, \mathbb{Z}=z)\right\}d F_{\bm{M} \mid X=x, \mathbb{Z} = z}(\bm m) d F_{\mathbb{Z}}(z)\\
		&+ \int\limits_{\mathcal{Z}}\int\limits_{\mathcal{M}} E(Y \mid X=1-x, \bm M= \bm m, \mathbb{Z}=z)\left\{d F_{\bm{M} \mid X=1, \mathbb{Z}=z}(\bm m)-d F_{\bm M \mid X=0, \mathbb{Z}=z}(\bm m)\right\} d F_{\mathbb{Z}}(z)\\
		&= \int\limits_{\mathcal{Z}}\int\limits_{\mathcal{M}}\{\mathcal{B}_2(\eta_2 + \gamma + \langle\bm{m},\bm{\beta}\rangle + \sum\limits_{j=1}^{J} z_j s_j) - \mathcal{B}_2(\eta_2 + \langle\bm{m},\bm{\beta}\rangle + \sum\limits_{j=1}^{J} z_j s_j)\}d F_{\bm{M} \mid X=x, \mathbb{Z} = z}(\bm m) d F_{\mathbb{Z}}(z)\\
		&+\int\limits_{\mathcal{Z}}\int\limits_{\mathcal{M}}\mathcal{B}_2(\eta_2 + (1-x)\gamma + \langle\bm{m},\bm{\beta}\rangle + \sum\limits_{j=1}^{J} z_j s_j)\left\{d F_{\bm{M} \mid X=1, \mathbb{Z}=z}(\bm m)-d F_{\bm M \mid X=0, \mathbb{Z}=z}(\bm m)\right\} d F_{\mathbb{Z}}(z),
	\end{aligned}
\end{equation*}
where $\mathcal B_2 = \mathcal A_2^{-1}$. Specially, if $\mathcal{B}_2$ is the identity function, then
\begin{equation*}
	\begin{aligned}
		\tau &= \int\limits_{\mathcal{Z}}\int\limits_{\mathcal{M}}\{\eta_2 + \gamma + \langle\bm{m},\bm{\beta}\rangle + \sum\limits_{j=1}^{J} z_j s_j - (\eta_2 + \langle\bm{m},\bm{\beta}\rangle + \sum\limits_{j=1}^{J} z_j s_j)\}d F_{\bm{M} \mid X=x, \mathbb{Z} = z}(\bm m) d F_{\mathbb{Z}}(z)\\
		&+\int\limits_{\mathcal{Z}}\int\limits_{\mathcal{M}}(\eta_2 + (1-x)\gamma + \langle\bm{m},\bm{\beta}\rangle + \sum\limits_{j=1}^{J} z_j s_j)\left\{d F_{\bm{M} \mid X=1, \mathbb{Z}=z}(\bm m)-d F_{\bm M \mid X=0, \mathbb{Z}=z}(\bm m)\right\} d F_{\mathbb{Z}}(z)\\
		& = \gamma + \int\limits_{\mathcal{Z}}\langle\bm{\mathcal B}_1(\bm{\eta}_{1} + \bm{\alpha} + \sum\limits_{j=1}^{J} z_j * \bm{\Psi}_j) - \bm{\mathcal B}_1(\bm{\eta}_{1} + \sum\limits_{j=1}^{J} z_j * \bm{\Psi}_j),\bm{\beta}\rangle d F_{\mathbb{Z}}(z),
	\end{aligned}
\end{equation*}
where $\bm{\mathcal B}_1 = \bm{\mathcal A}_1^{-1}$. Obviously, when $\bm{\mathcal B}_1$ is also the identity function, $\tau = \gamma+ <\bm{\alpha},\bm{\beta}>$.

In this study, only low-dimensional pre-treatment covariates are considered, high-dimensional cases such as the genetic data of children and their parents are not involved. Although there are some ways to deal with high-dimensional covariates \citep[e.g.,][]{luo2017estimating}, these methods may not be effective when the mediator is an image. A recent work \citep{yu2022mapping} proposes a method to map the genetic-imaging-clinical pathway for Alzheimer’s disease, in which the treatment variable is a two-dimensional image, the covariates are ultra-high-dimensional clinical and genetic data, the outcome variable is a continuous scalar, and the relationship among them follow linear structural equation models. However, it is difficult to cope with a three-dimensional image mediator and high-dimensional covariates via this method, since the tensor nuclear norm is generally NP-hard to even approximate \citep{friedland2018nuclear}. The problem of three-dimensional image mediator combined with high-dimensional covariates remains for future work. The post-treatment covariates are not considered here, since in many cases they are affected by treatment, which makes the identification of defined causal effects difficult. A possible solution is to define interventional effects \citep{vanderweele2014effect}, and this issue is left for future work.

\section*{APPENDIX A: PROOF OF IDENTIFICATION}
\subsection*{Proof of Theorem \ref{th:1}}
Under Assumptions \ref{Ass:1}-\ref{Ass:3} following the same argument of the Theorem {\color{red}1} in \citet{imai2010identification}, for $x = 0,1$,
\begin{equation*}
	\begin{aligned}
		\gamma(x)=\int\limits_{\mathcal{Z}}\int\limits_{\mathcal{M}}&\left\{E(Y \mid X = 1, \bm M = \bm m, \mathbb{Z}=z)\right.\\
		&\left.-E(Y \mid X = 0, \bm M=\bm m, \mathbb{Z}=z)\right\}d F_{\bm{M} \mid X=x, \mathbb{Z} = z}(\bm m) d F_{\mathbb{Z}}(z),\\
		\delta(x)=\int\limits_{\mathcal{Z}}\int\limits_{\mathcal{M}} &E(Y \mid X=x, \bm M= \bm m, \mathbb{Z}=z)\\
		&\left\{d F_{\bm{M} \mid X=1, \mathbb{Z}=z}(\bm m)-d F_{\bm M \mid X=0, \mathbb{Z}=z}(\bm m)\right\} d F_{\mathbb{Z}}(z),	
	\end{aligned}
\end{equation*}
where $F_{\mathbb{Z}}(\cdot)$ and $F_{\bm M \mid X, \mathbb{Z}}(\cdot)$ represent the distribution function and the conditional distribution function.

Under Assumption \ref{Ass:4}, 
\begin{equation*}
	\begin{aligned}
		\gamma(x)&=\int\limits_{\mathcal{Z}}\int\limits_{\mathcal{M}}(\eta_{2} +\gamma + \langle\bm m,\bm{\beta}\rangle + \sum\limits_{j=1}^{J} z_j s_j -\eta_{2}\\
		& \qquad \qquad - \langle\bm m,\bm{\beta}\rangle - \sum\limits_{j=1}^{J} z_j s_j)d F_{\bm{M} \mid X=x, \mathbb{Z} = z}(\bm m) d F_{\mathbb{Z}}(z)\\
		&= \gamma.\\
		\delta(x)&=\int\limits_{\mathcal{Z}}\int\limits_{\mathcal{M}}(\eta_{2} +x \gamma + \langle\bm m,\bm{\beta}\rangle + \sum\limits_{j=1}^{J} z_j s_j )\\
		&\qquad \qquad \left\{d F_{\bm{M} \mid X=1, \mathbb{Z}=z}(\bm m)-d F_{\bm M \mid X=0, \mathbb{Z}=z}(\bm m)\right\} d F_{\mathbb{Z}}(z)\\
		& = \int\limits_{\mathcal{Z}}(\langle E(\bm{M} \mid X=1, \mathbb{Z}=z) - E(\bm M \mid X=0, \mathbb{Z}=z),\bm{\beta}\rangle) d F_{\mathbb{Z}}(z)\\
		& = \langle \bm \alpha, \bm \beta \rangle.\\
		\tau &= \gamma(1) + \delta(0) = \gamma(0) + \delta(1) = \gamma + \langle \bm \alpha, \bm \beta \rangle.
	\end{aligned}
\end{equation*}
\subsection*{Proof of Lemma \ref{le:1} and Theorem \ref{th:2}}
\begin{lemma}\label{le:1}
	Assumption $\ref{Ass:5}$ is weaker than Assumptions $\ref{Ass:3}$ and $\ref{Ass:4}$.
\end{lemma}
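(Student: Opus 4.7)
The plan is to show that Assumptions \ref{Ass:3} and \ref{Ass:4} together imply each of the four parts of Assumption \ref{Ass:5}, proceeding part by part. The key observation is that Assumption \ref{Ass:3}(a) asserts \emph{joint} conditional independence of the pair $\{Y(x',\bm m),\bm M(x)\}$ from $X$ given $\mathbb{Z}$, which is stronger than the two marginal statements and yields $Y(x',\bm m)\perp X\mid \bm M(x),\mathbb{Z}$ and $\bm M(x)\perp X\mid \mathbb{Z}$ as immediate corollaries. Combined with Assumption \ref{Ass:3}(b), one can further derive the marginal independence $Y(x,\bm m)\perp\bm M(x')\mid \mathbb{Z}$: given $\mathbb{Z}$, the conditional law of $(Y(x,\bm m),\bm M(x'))$ equals that under any extra conditioning on $X=x'$ by 3(a), and within the latter the two components are independent by 3(b).

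From here the parts of Assumption \ref{Ass:5} fall out cleanly. For (a), the first equation is the mean-version of $\bm M(x)\perp X\mid\mathbb{Z}$, and the second, after replacing the random $\bm M(x)$ by the conditioning value $\bm m$, is the mean-version of $Y(x,\bm m)\perp X\mid \bm M(x),\mathbb{Z}$ just derived. Part (b) is the mean-version of $Y(x,\bm m)\perp\bm M(x')\mid\mathbb{Z}$. Part (c) is the positivity clause stated verbatim inside Assumption \ref{Ass:3}.

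For the parametric part (d), the bridge is the consistency assumption: when $X_i=x$ we have $\bm M_i=\bm M_i(x)$, and when additionally $\bm M_i=\bm m$ we have $Y_i=Y_i(x,\bm m)$. Applying these identities inside the conditional expectations of Assumption \ref{Ass:4} gives
\begin{equation*}
E\{\bm M(x)\mid X=x,\mathbb{Z}=z\}=\bm\eta_1+x*\bm\alpha+\sum_{j=1}^{J}z_j*\bm\Psi_j,
\end{equation*}
\begin{equation*}
E\{Y(x,\bm m)\mid X=x,\bm M(x)=\bm m,\mathbb{Z}=z\}=\eta_2+x\gamma+\langle\bm m,\bm\beta\rangle+\sum_{j=1}^{J}z_j s_j.
\end{equation*}
The independence statements from step one then allow me to drop the conditioning on $X$ (and, in the second display, on $\bm M(x)$ as well) without changing the conditional mean, producing exactly the linear forms required in Assumption \ref{Ass:5}(d) with $\bm\eta_1^{(c)}=\bm\eta_1$, $\bm\alpha^{(c)}=\bm\alpha$, $\bm\Psi_j^{(c)}=\bm\Psi_j$, $\eta_2^{(c)}=\eta_2$, $\gamma^{(c)}=\gamma$, $\bm\beta^{(c)}=\bm\beta$, $s_j^{(c)}=s_j$.

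The step I expect to require the most care is the second mean equation in 5(a) and the derivation of 5(b), because both require extracting a \emph{marginal} (in $X$ or in $\bm M(x')$) conditional-mean independence from the mixture of a joint unconditional-on-$X$ independence (3(a)) and a within-stratum $X=x$ independence (3(b)). Writing out the joint density under 3(a) and factoring it cleanly before invoking 3(b) is the cleanest route; once those independence statements are in hand, steps (c) and (d) are mechanical, and the identification of the $(c)$-superscripted parameters with the LISEM parameters follows by matching coefficients in the linear expressions.
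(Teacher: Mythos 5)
Your proposal is correct and follows essentially the same route as the paper: extract the marginal and weak-union consequences of the joint independence in Assumption \ref{Ass:3}(a), combine them with \ref{Ass:3}(b) to obtain the cross-world statement in \ref{Ass:5}(b), read off positivity, and use consistency plus the LISEMs to drop the conditioning on $X$ and $\bm M(x)$ and match the $(c)$-superscripted parameters. The one thing you omit is the reason the implication is \emph{strict}, which the paper needs for ``weaker than'': conditional mean independence (Assumption \ref{Ass:5}) does not imply conditional independence (Assumption \ref{Ass:3}), so the converse fails; a one-sentence remark to that effect would complete the argument.
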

First, note that Assumption \ref{Ass:3} (a) implies\\
$$\begin{array}{rclr}
	\bm{M}_i(x) &\perp& X_i \mid \mathbb{Z}_i = z & \qquad (\ast)\\
	Y_i(x^{\prime},\bm{m}) &\perp& X_i \mid \mathbb{Z}_i = z & \qquad (\ast\ast)\\
	Y_i(x^{\prime},\bm{m}) &\perp& X_i \mid \bm{M}_i(x) = \bm{m}^{\prime}, \mathbb{Z}_i = z &\qquad (\ast\ast\ast)
\end{array}$$\\

According to $(\ast)$, $E(\bm{M}(x)\mid X = x, \mathbb{Z} = z) = E(\bm{M}(x) \mid \mathbb{Z} = z)$. 

By $(\ast\ast\ast)$, 
\begin{equation*}
	\begin{aligned}
		E&(Y(x,\bm{M}(x))\mid X = x, \bm{M}(x) = \bm{m}, \mathbb{Z} = z) \\
		&\ = E(Y(x,\bm m)\mid X = x, \bm{M}(x) = \bm{m}, \mathbb{Z} = z)\\
		&\overset{(\ast\ast\ast)}{=} E(Y(x,\bm m)\mid \bm{M}(x) = \bm{m}, \mathbb{Z} = z)\\
		&\ = E(Y(x,\bm{M}(x))\mid \bm{M}(x) = \bm{m}, \mathbb{Z} = z).
	\end{aligned}	
\end{equation*}

Then under Assumption \ref{Ass:3} (b), $(\ast\ast)$ and $(\ast\ast\ast)$, we obtain
\begin{equation*}
	\begin{aligned}
		E&(Y(x,\bm{M}(x))\mid X = x^{\prime}, \bm{M}(x^{\prime}) = \bm{m}, \mathbb{Z} = z) \\
		&\overset{2 (b)}{=} E(Y(x,\bm m)\mid X = x^{\prime}, \mathbb{Z} = z)\\
		&\overset{(\ast\ast)}{=} E(Y(x,\bm m)\mid \mathbb{Z} = z),\\
		E&(Y(x,\bm{m})\mid X = x^{\prime}, \bm{M}(x^{\prime}) = \bm{m}, \mathbb{Z} = z) \\
		&\overset{(\ast\ast\ast)}{=} E(Y(x,\bm m)\mid \bm{M}(x^{\prime}) = \bm{m}, \mathbb{Z} = z), \\
		E&(Y(x,\bm m)\mid \bm{M}(x^{\prime}) = \bm{m}, \mathbb{Z} = z) = E(Y(x,\bm m)\mid \mathbb{Z} = z).
	\end{aligned}
\end{equation*}

Additionally, obviously, Assumption \ref{Ass:5} (c) can be deduced by $0 < Pr(X = x\mid \mathbb{Z}_i = z)$ and $0 < p(\bm M(x) = m \mid X = x,\mathbb{Z} = z)$.

Further, under Assumption \ref{Ass:3}, 
\begin{equation*}
	\begin{aligned}
		E(\bm{M}(x) \mid \mathbb{Z} = z)&=E(\bm{M}(x) \mid X=x, \mathbb{Z} = z)\\
		&=E(\bm{M} \mid X=x, \mathbb{Z} = z).\\
		E(Y(x,\bm{m}) \mid \mathbb{Z} = z)&=E(Y(x,\bm{m}) \mid X = x, \mathbb{Z} = z)\\
		&= \int\limits_{\mathcal{M}} E(Y(x, \bm m)) \mid X = x, \bm{M}(x) = \bm m, \mathbb{Z} = z) d F_{\bm{M}(x) \mid X = x, \mathbb{Z} = z}(\bm m)\\
		&= \int\limits_{\mathcal{M}} E(Y \mid X = x, \bm{M}(x) = \bm m, \mathbb{Z} = z) d F_{\bm{M}\mid X = x, \mathbb{Z} = z}(\bm m).\\
	\end{aligned}
\end{equation*}
Then by Assumption \ref{Ass:4}, 
\begin{equation*}
	\begin{aligned}
		E(\bm{M}(x) \mid \mathbb{Z} = z) &= \bm{\eta}_{1} + x * \bm{\alpha} + \sum\limits_{j=1}^{J} z_j* \bm{\Psi}_{j},\\
		E(Y(x,\bm{m}) \mid \mathbb{Z} = z) &= \eta_{2} + x \gamma + \langle\bm{m},\bm{\beta}\rangle + \sum\limits_{j=1}^{J} z_j s_j.
	\end{aligned}
\end{equation*}

Thus, Assumptions \ref{Ass:3} and \ref{Ass:4} imply Assumption \ref{Ass:5}. However, in general, the mean independence between two random variables does not imply they are independent of each other. Thus the inverse does not hold. Hence, Assumption $\ref{Ass:5}$ is weaker than Assumptions \ref{Ass:3} and \ref{Ass:4}.

Next, we prove Theorem \ref{th:2}.

Under Assumptions \ref{Ass:1}, \ref{Ass:2} and \ref{Ass:5}, 
\begin{equation*}
	\begin{aligned}
		E(\bm{M} \mid X = x, \mathbb{Z} = z) &= E(\bm{M}(X) \mid X = x, \mathbb{Z} = z)\\
		& = E(\bm{M}(x) \mid X = x, \mathbb{Z} = z)\\
		& = E(\bm{M}(x) \mid \mathbb{Z} = z)\\
		&= \bm{\eta}_1^{(c)} + x * \bm{\alpha}^{(c)} + \sum\limits_{j=1}^{J}z_j * \bm{\Psi}_j^{(c)}.\\
		E(Y \mid X = x, \bm{M}(X) = \bm m, \mathbb{Z} = z) &= E(Y(X, \bm M(X)) \mid X = x, \bm{M}(X) = \bm m, \mathbb{Z} = z)\\
		& = E(Y(x, \bm M(x)) \mid X = x, \bm{M}(x) = \bm m, \mathbb{Z} = z)\\
		& = E(Y(x, \bm M(x)) \mid \bm{M}(x) = \bm m, \mathbb{Z} = z)\\
		& = E(Y(x, \bm m) \mid \bm{M}(x) = \bm m, \mathbb{Z} = z)\\
		& = E(Y(x, \bm m) \mid \mathbb{Z} = z)\\
		& = \eta_2^{(c)} + x\gamma^{(c)} + \langle\bm{m}, \bm{\beta}^{(c)}\rangle + \sum\limits_{j=1}^{J}z_j s_j^{(c)}.
	\end{aligned}
\end{equation*}

Comparing the results above with the LISEMs, then
\[
\begin{aligned}
	(\bm{\eta}_1^{(c)}-\bm{\eta}_1) + x * (\bm{\alpha}^{(c)} - \bm{\alpha}) + \sum\limits_{j=1}^{J}z_j * (\bm{\Psi}_j^{(c)}-\bm{\Psi}) = \bm{0},\\
	(\eta_2^{(c)} - \eta_2) + x(\gamma^{(c)} - \gamma) + \langle\bm{m}, \bm{\beta}^{(c)}- \bm{\beta}^{(c)}\rangle + \sum\limits_{j=1}^{J}z_j (s_j^{(c)}-s_j) = 0.
\end{aligned}
\]

Since $x$, $\bm m$, $z$ are arbitrary, then $\bm{\alpha}^{(c)} = \bm{\alpha}, \ \gamma^{(c)} = \gamma, \ \bm{\beta}^{(c)} = \bm{\beta}$. This follows that $E(\bm M(1) - \bm M(0)) = \bm{\alpha}$, $E(Y(1, \bm m) - Y(0, \bm m)) = \gamma$ and $E(Y(x, \bm m) - Y(x, \bm m^{\prime})) = \langle \bm{m}-\bm{m}^{\prime}, \bm \beta \rangle$.

Further, under Assumptions \ref{Ass:1}, \ref{Ass:2} and \ref{Ass:5},
\[
\begin{aligned}
	E\left(Y\left(x, \boldsymbol{M}\left(x^{\prime}\right)\right) \mid \mathbb{Z}=z\right) &=\int_{\mathcal{M}} E\left(Y\left(x, \boldsymbol{M}\left(x^{\prime}\right)\right) \mid \boldsymbol{M}\left(x^{\prime}\right)=\boldsymbol{m}, \mathbb{Z}=z\right) d F_{\boldsymbol{M}\left(x^{\prime}\right) \mid \mathbb{Z}=z}(\boldsymbol{m}) \\
	&=\int_{\mathcal{M}} E(Y(x, \boldsymbol{m}) \mid \boldsymbol{M}\left(x^{\prime}\right)=\boldsymbol{m}, \mathbb{Z}=z) d F_{\boldsymbol{M}\left(x^{\prime}\right) \mid \mathbb{Z}=z}(\boldsymbol{m}) \\
	&=\int_{\mathcal{M}} E(Y(x, \boldsymbol{m}) \mid \mathbb{Z}=z) d F_{\boldsymbol{M}\left(x^{\prime}\right) \mid \mathbb{Z}=z}(\boldsymbol{m}) \\
	&=\int_{\mathcal{M}}(\eta_{2}^{(c)}+x \gamma^{(c)}+\langle\boldsymbol{m}, \boldsymbol{\beta}^{(c)}\rangle+\sum_{j=1}^{J} z_{j} s_{j}^{(c)}) d F_{\boldsymbol{M}\left(x^{\prime}\right) \mid \mathbb{Z}=z}(\boldsymbol{m}) \\
	&=\eta_{2}^{(c)}+x \gamma^{(c)}+\langle E\left(\boldsymbol{M}\left(x^{\prime}\right) \mid \mathbb{Z}=z\right), \boldsymbol{\beta}^{(c)}\rangle+\sum_{j=1}^{J} z_{j} s_{j} \\
	&=\eta_{2}^{(c)}+x \gamma^{(c)}+\langle\boldsymbol{\eta}_{1}^{(c)}+x^{\prime} * \boldsymbol{\alpha}^{(c)}+\sum_{j=1}^{J} z_{j} * \boldsymbol{\Psi}_{j}^{(c)}, \boldsymbol{\beta}^{(c)}\rangle+\sum_{j=1}^{J} z_{j} s_{j}^{(c)} \\
	&=\eta_{2}^{(c)}+\langle\boldsymbol{\eta}_{1}^{(c)}, \boldsymbol{\beta}^{(c)}\rangle+x \gamma^{(c)}+x^{\prime}\langle\boldsymbol{\alpha}^{(c)}, \boldsymbol{\beta}^{(c)}\rangle+\sum_{j=1}^{J} z_{j}\left(\langle\boldsymbol{\Psi}_{j}^{(c)}, \boldsymbol{\beta}^{(c)}\rangle+s_{j}^{(c)}\right).
\end{aligned}
\]

Hence, $\tau = \gamma + \langle\bm{\beta},\bm \alpha\rangle, \gamma(x) = \gamma, \delta(x) = \langle\bm{\beta},\bm \alpha\rangle$.

\section*{APPENDIX B: ESTIMATION STRATEGY FOR TENSOR PARAMETER IN MODEL ($\ref{eq:11}$)}
\setcounter{equation}{0}
\renewcommand{\theequation}{\rm{S}.\arabic{equation}}
To begin, we briefly review some notations in matrix/array operations. Given two matrices $U = [u_1, \cdots,u_p] \in \mathcal{R}^{k \times p} $ and $V = [v_1, \cdots, v_q] \in \mathcal{R}^{l \times q} $, the {\it Kronecker product} is the {\it kl}-by-{\it pq} matrix $U \otimes V = [u_1 \otimes V,  u_2 \otimes V, \cdots, u_p \otimes V]$. The {\it mode}-{\it d matricization}, $V(d)$, maps a tensor $\bm V \in \mathcal{R}^{N_1 \times \cdots \times N_D}$ into a $N_d \times \prod\limits_{d^{\prime} \neq d}^{D}N_{d^{\prime}}$ matrix such that the $(p_1, . . . , p_D)$ element of the tensor $\bm V$ maps to the $(p_d, l)$ element of the matrix $V(d)$, where $l =1+\sum_{d^{\prime} \neq d}\left(p_{d^{\prime}}-1\right) \prod_{d^{\prime \prime}<d^{\prime}, d^{\prime \prime} \neq d} N_{d^{\prime \prime}}$. The {\it mode}-{\it d multiplication} of the tensor $\bm V$ with a matrix $U \in \mathcal{R}^{N_d \times q}$, denoted by $\bm V \times_d U \in \mathcal{R}^{N_1\times \cdots \times q \times \cdots N_D}$, is the multiplication of the mode-{\it d} fibers of $\bm V$ by $U$, that is, $U V(d)$.
	
	Next, model $(\ref{eq:11})$ can be rewritten as
	\begin{equation}\label{eq:S1}
		E(Y_i \mid Z_i, X_i, \bm{M}_i)= B_{X_i,Z_i}\vartheta+\left\langle \bm{M}_i, \bm{B} \times_{1} U_1 \times_{2} U_2 \times_{3} U_3\right\rangle,
	\end{equation}
	by assuming that the coefficient tensor $\bm \beta$ follows a Tucker decomposition, that is, $\bm \beta = \bm{B} \times_{1} U_1 \times_{2} U_2 \times_{3} U_3$, where $\bm B \in \mathcal{R}^{R_1\times R_2 \times R_3}$ is the core tensor, $U_d$ is the $N_d$-by-$R_d$ factor matrice and `$\times_d$' represents the tensor-matrix product along mode $d$, for $d = 1, 2 ,3$. Additionally, $B_{X_i,Z_i} = (X_i,Z_{i1},\cdots,Z_{iJ})$ and $\vartheta = (\delta,s_1,\cdots,s_J)^{T}$. In fact, model $(\ref{eq:S1})$ is a special case of model ({\color{red}4}) in \cite{li2018tucker} when assuming that $Y$ belongs to a gaussian family with probability density,
	\begin{equation}\label{eq:S2}p(y\mid \mu,\sigma)=\frac{1}{\sigma \sqrt{2 \pi}} e^{-\frac{1}{2}\left(\frac{y-\mu}{\sigma}\right)^2}.\end{equation}
	Thus, given $n$ iid data $\{(z_i, x_i,\bm{m}_i,y_i), i = 1, . . . , n\}$, the loglikelihood function for $(\ref{eq:S2})$ is
	\begin{equation}\label{eq:S3}
		\begin{array}{rl}
			\ell(\vartheta,U_1,U_2,U_3,\bm{B}) &= -\frac{n}{2} \ln (2 \pi)-\frac{n}{2} \ln \sigma_i^2-\frac{1}{2 \sigma_i^2} \sum_{i=1}^n\left(y_i-\mu_i\right)^2\\
			& = \sum_{i = 1}^{n}\frac{y_i\mu_i - \mu_i^2/2}{\sigma_i^2} + \sum^{n}_{i = 1} -(\frac{y_i^2}{2\sigma_i^{2}} + \frac{ln(2\pi)}{2}+\frac{ln(\sigma_i^2)}{2}),
		\end{array}
	\end{equation}
    where $\mu_i$ is related to regression parameters $(\vartheta,U_1,U_2,U_3,\bm{B})$ through $(\ref{eq:S1})$.
	
	\cite{li2018tucker} suggests a block ascent algorithm to obtain the maximum likelihood estimation (MLE) for the Tucker tensor regression model $(\ref{eq:S1})$, we briefly review it as follows.
	\begin{algorithm}
		\caption{\leftline{\large Block relaxation algorithm for fitting the Tucker tensor regression.}}
		\begin{algorithmic}
			\State {\small Initialize: $\vartheta^{(0)} = \arg \max_{\vartheta}\ell(\vartheta, 0, \cdots , 0)$, $U(d)^{(0)}$ a random matrix for $d = 1, \cdots, D$, and $\bm{B}^{(0)} \in \mathcal{R}^{R_1\times R_2 \times R_3}$ a random tensor.}
			\Repeat
			\For{d = 1, 2, 3}
			\State $U_d^{(t+1)}=\arg\max_{U_{d}} \ell(\vartheta^{(t)}, U_1^{(t+1)}, \ldots, U_{d-1}^{(t+1)}, U_d, U_{d+1}^{(t)}, \ldots, U_D^{(t)}, \bm{B}^{(t)})$
			\EndFor
			\State $\bm{B}^{(t+1)}=\arg\max_{\bm{B}} \ell\left(\vartheta^{(t)}, U_1^{(t+1)}, \ldots, U_D^{(t+1)}, \bm{B}\right)$
			\State $\vartheta^{(t+1)}=\arg\max_{\vartheta} \ell\left(\vartheta, U_1^{(t+1)}, \ldots, U_D^{(t+1)}, \bm{B}^{(t+1)}\right)$
			\Until $\ell\left(\vartheta^{(t+1)}\right)-\ell\left(\vartheta^{(t)}\right)<\varsigma$
		\end{algorithmic}
	\end{algorithm}
	
	Additionally, in practical application, a warm initialization may cause a substantial reduction in computational burden and a more stable estimate, so in the Section \ref{sec:4} of this paper and the additional simulation in Appendix C, the initial value $\bm{\beta}^{(0)}$ is obtained by the sequentially truncated HOSVD (ST-HOSVD) \citep{vannieuwenhoven2012new} based on $(\ref{eq:S.3})$.
	
	\begin{proposition}\label{pro:1}
		Suppose $\widehat{\bm \Delta} = (\widehat{\bm \alpha}, \widehat{\bm \Psi}_1, \cdots, \widehat{\bm \Psi}_J)$ is the ordinary least square estimator of $\bm \Delta = (\bm \alpha, \bm \Psi_1, \cdots, \bm \Psi_J)$ of equation $(\ref{eq:10})$, and $\hat{\theta}$ is the ordinary least square estimator of $\theta$ of equation $(\ref{eq:12})$, then equation $(\ref{eq:11})$ can be rewritten as
		\begin{equation}\label{eq:S.3}
			\widehat \omega = A_{\widehat{\varepsilon}} \beta + \hat{\epsilon},
		\end{equation}
		where $ \widehat \omega = Y - B_{X,Z}\hat{\theta} $, $\beta = \mathrm{vec}(\bm \beta)$, $\hat{\epsilon} = \Lambda\epsilon$, and $A_{\widehat{\varepsilon}} = (\widehat{\varepsilon}_1, \cdots, \widehat{\varepsilon}_n)^{T}$, $\widehat{\varepsilon}_i = \mathrm{vec}(\widehat{\bm \varepsilon}_i)$, $\widehat{\bm \varepsilon}_i = \bm M_i - X_i * \widehat{\bm \alpha} - \sum\limits_{j=1}^{J} Z_{ij} * \widehat{\bm{\Psi}}_{j}$, $\Lambda=(I - B_{X,Z}^{T}(B_{X,Z}^{T}B_{X,Z})^{-1}B_{X,Z})$.
	\end{proposition}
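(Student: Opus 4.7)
The plan is to substitute the residual decomposition from equation (\ref{eq:10}) into equation (\ref{eq:11}), then project onto the orthogonal complement of the column space of $B_{X,Z}$. First I would use the definition of the OLS estimator $\widehat{\bm\Delta}$ to write each observed mediator as
\[
\bm M_i = X_i * \widehat{\bm{\alpha}} + \sum_{j=1}^{J} Z_{ij} * \widehat{\bm{\Psi}}_j + \widehat{\bm{\varepsilon}}_i,
\]
so that taking the inner product with $\bm\beta$ and plugging into (\ref{eq:11}) yields
\[
Y_i = X_i\bigl(\gamma + \langle\widehat{\bm\alpha},\bm\beta\rangle\bigr) + \sum_{j=1}^{J} Z_{ij}\bigl(s_j + \langle\widehat{\bm\Psi}_j,\bm\beta\rangle\bigr) + \langle\widehat{\bm\varepsilon}_i,\bm\beta\rangle + \epsilon_i.
\]
In matrix form this reads $Y = B_{X,Z}\theta^{*} + A_{\widehat{\varepsilon}}\beta + \epsilon$, where $\theta^{*}$ collects the bracketed scalar coefficients. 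The explicit value of $\theta^{*}$ will not matter for what follows.

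Next I would apply the residual-maker matrix $\Lambda = I - B_{X,Z}(B_{X,Z}^T B_{X,Z})^{-1}B_{X,Z}^T$ (the projection onto the orthogonal complement of the column span of $B_{X,Z}$, reading the $\Lambda$ in the statement as this standard residual projector) on both sides. Since $\hat\theta = (B_{X,Z}^T B_{X,Z})^{-1}B_{X,Z}^T Y$ by the definition of the OLS estimate from (\ref{eq:12}), we have $\widehat\omega = Y - B_{X,Z}\hat\theta = \Lambda Y$, hence
\[
\widehat\omega = \Lambda B_{X,Z}\theta^{*} + \Lambda A_{\widehat{\varepsilon}}\beta + \Lambda\epsilon.
\]
The first term vanishes because $\Lambda B_{X,Z} = 0$ by construction.

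The only nontrivial step is to show $\Lambda A_{\widehat{\varepsilon}} = A_{\widehat{\varepsilon}}$, which amounts to verifying that each column of $A_{\widehat{\varepsilon}}$ is already orthogonal to the columns of $B_{X,Z}$, i.e.\ $B_{X,Z}^{T} A_{\widehat{\varepsilon}} = 0$. This is exactly the OLS normal-equation content of the first-stage regression: since $\widehat{\bm\alpha}$ and the $\widehat{\bm\Psi}_j$ minimize $\sum_i \|\bm M_i - X_i*\bm\alpha - \sum_j Z_{ij}*\bm\Psi_j\|_F^{2}$, differentiating with respect to each tensor entry gives $\sum_i X_i\widehat{\bm\varepsilon}_i = \bm 0$ and $\sum_i Z_{ij}\widehat{\bm\varepsilon}_i = \bm 0$ for every $j$. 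Vectorizing these identities and stacking them yields $B_{X,Z}^{T}A_{\widehat{\varepsilon}} = 0$, whence $\Lambda A_{\widehat{\varepsilon}} = A_{\widehat{\varepsilon}}$. Assembling the three pieces gives $\widehat\omega = A_{\widehat{\varepsilon}}\beta + \Lambda\epsilon = A_{\widehat{\varepsilon}}\beta + \hat\epsilon$, as claimed. No real obstacle arises; the proof is essentially a bookkeeping exercise in which the key conceptual point is that the first-stage OLS orthogonality annihilates the projection $\Lambda$ acting on $A_{\widehat{\varepsilon}}$.
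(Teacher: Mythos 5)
Your proof is correct, and it reaches the conclusion by a genuinely different route from the paper's. The paper works with the population-form equation $Y = B_{X,Z}\theta + A_{\varepsilon}\beta + \epsilon$ and then establishes the identity $A_{\widehat{\varepsilon}} = \bigl(I - B_{X,Z}(B_{X,Z}^{T}B_{X,Z})^{-1}B_{X,Z}^{T}\bigr)A_{\varepsilon}$ explicitly, by writing $\widehat{\varepsilon} = (I - A_{X,Z}(A_{X,Z}^{T}A_{X,Z})^{-1}A_{X,Z}^{T})\varepsilon$ and exploiting the Kronecker structure $A_{X,Z} = B_{X,Z}\otimes I$; once both $A_{\widehat\varepsilon}\beta$ and $\hat\epsilon$ are exhibited as images under the same residual-maker, the claim follows by applying that projector to $Y$. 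You instead substitute the fitted-plus-residual decomposition of $\bm M_i$ directly into (\ref{eq:11}), absorb the fitted part into the column space of $B_{X,Z}$ (where its exact coefficient $\theta^{*}$ is irrelevant because $\Lambda B_{X,Z}=0$), and then handle the key step via the first-stage normal equations $B_{X,Z}^{T}A_{\widehat{\varepsilon}}=0$, which make $\Lambda$ act as the identity on $A_{\widehat{\varepsilon}}$. Your argument avoids the Kronecker-product manipulations entirely and is arguably cleaner for the proposition as stated; the paper's computation buys the extra explicit relation between the estimated residuals $A_{\widehat\varepsilon}$ and the true errors $A_{\varepsilon}$, which is of independent use (e.g.\ for distributional statements about $\widehat{\bm\varepsilon}$). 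You also correctly read the $\Lambda$ in the statement as the standard residual projector $I - B_{X,Z}(B_{X,Z}^{T}B_{X,Z})^{-1}B_{X,Z}^{T}$; the displayed formula in the proposition has its transposes misplaced, and the paper's own proof uses the corrected form as you do.
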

	\begin{proof}
		Note that $\widehat{\Delta} = (A_{X,Z}^{T}A_{X,Z})^{-1}A_{X,Z}^{T}M$, then $\widehat{\varepsilon} = (I - A_{X,Z}(A_{X,Z}^{T}A_{X,Z})^{-1}A_{X,Z}^{T})\varepsilon$.
		
		Due to $A_{X,Z} = B_{X,Z} \otimes I$, $\varepsilon = \mathrm{vec}(A_{\varepsilon}^{T})$, $\widehat{\varepsilon}= \mathrm{vec}(A_{\widehat \varepsilon}^{T})$ and the Properties of Kronecker product, then $\mathrm{vec}(A_{\widehat \varepsilon}^{T}) = (I - (B_{X,Z} \otimes I)((B_{X,Z} \otimes I)^{T}(B_{X,Z} \otimes I))^{-1}(B_{X,Z}^T \otimes I)) \mathrm{vec}(A_{\varepsilon}^{T}) = ((I - B_{X,Z}(B_{X,Z}^{T}B_{X,Z})^{-1}B_{X,Z}^T) \otimes I) \mathrm{vec}(A_{\varepsilon}^{T})$, that is, $\mathrm{vec}(A_{\widehat \varepsilon}^{T}) = \mathrm{vec}(A_{\varepsilon}^{T}(I - B_{X,Z}(B_{X,Z}^{T}B_{X,Z})^{-1}B_{X,Z}^T))$. Thus, $A_{\widehat{\varepsilon}} = (I - B_{X,Z}(B_{X,Z}^{T}B_{X,Z})^{-1}B_{X,Z}^{T})A_{\varepsilon}$.
		
		Therefore, 
		\[
		\begin{array}{l}
			\hat \omega = Y - B_{X,Z}\hat{\theta}\\
			\quad = (I - B_{X,Z}(B_{X,Z}^TB_{X,Z})^{-1}B_{X,Z}^{T})A_{\varepsilon}\beta + (I - B_{X,Z}(B_{X,Z}^TB_{X,Z})^{-1}B_{X,Z}^{T}) \epsilon\\
			\quad = A_{\widehat \varepsilon}\beta + \hat \epsilon.
		\end{array}
		\]
	\end{proof}
	
	At last, we discuss the selection of Tucker rank $R$ for $\bm{\beta}$. In general, tuning rank is challenging for tensors as many parameter values need to be tuned simultaneously \citep{hillar2013most}. Specifically, inspired by \cite{li2018tucker}, which treat this as a model selection problem, we adopt Bayesian information criterion(BIC), $-2 \log \ell+\log (n) p_e$, to choose the Tucker rank for the tensor parameter from a `warm' set. Here, $\ell$ is defined in $(\ref{eq:S3})$, $p_e = \sum_{d=1}^{3} N_d R_d+\prod_{d=1}^{3} R_d-\sum_{d=1}^{3} R_d^2$ and the warm set is selected by considering the sample size $n$ and the order of $\bm{M}$. Table {\color{red}1} of \cite{han2022optimal} shows the requirement for the sample size to achieve the corresponding estimation error in various methods \citep[e.g.,][]{han2022optimal,chen2019non, tomioka2013convex}. Based on this result, we choose $R = (R_1,R_2,R_3)$ from $\{(1, 1,1),(1,2,1),(2,2,1),(2,2,2)\}$ in the simulation part of this paper by considering the sample size $n$ = 100 and $\bm{M}$ is of type $8\times8\times8$. In the real data analysis part $R$ is chosen from $\{(1, 1,1),(1,2,1),(2,2,1),(2,2,2),(2,3,2),(2,3,3),(3,3,3)\}$ since the sample size is 93 and the order of $\bm{M}$ in each direction ranges from 8 to 24.

%APPENDIX C: SUPPLEMENTARY FIGURES AND TABLES FOR SIMULATION STUDIES AND REAL DATA ANALYSIS
\section*{APPENDIX C: DISCUSSIONS ABOUT THE SENSITIVITY OF DIFFERENT $\tilde{N}_1 \times \tilde{N}_2 \times \tilde{N}_3 $ AND POOLING METHODS TO POTENTIAL REGIONS SELECTION}
\setcounter{table}{0}
\renewcommand{\thetable}{\rm{S}.\arabic{table}}
Here, the sensitivity of different $\tilde{N}_1 \times \tilde{N}_2 \times \tilde{N}_3 $ and pooling methods to potential regions selection is discussed. Figure \ref{fig:4} demonstrates that the selection of potential regions is not sensitive to the different pooling methods. However, the selection of potential regions is inconsistent for different sizes of $\tilde{N}_1 \times \tilde{N}_2 \times \tilde{N}_3$ for this particular real data application. Specifically, three settings of $\tilde{N}_1 \times \tilde{N}_2 \times \tilde{N}_3$ are considered: (9, 12, 10), (16, 16, 16), and (18, 24, 20), for each setting both the average pooling and max pooling methods are used. Only with (18, 24, 20) does the procedure select regions that are of statistical significance, which is shown in Figure 4. For the other two settings the procedure doesn't find any regions with at least 5\% of loci conveying statistically significant indirect effects in the second step. One reason may be that the estimates of $\epsilon$ in the first step are not normally distributed by the Kolmogorov-Smirnov test, causing some inconsistency of the bootstrap selection procedure. To overcome this problem, our strategy is to set multiple choices of pooling cube sizes and the threshold p-value to be 0.2 to include more potential mediating regions in the first step, then in the second step to perform a more strict procedure to locate the specific mediating loci.

To assess the performance of the proposed method in more cases, additional simulation is performed. The setting of this simulation is summarized in Table \ref{ta:S.1}. The results in Table \ref{ta:S.2} and Figure \ref{fig:S.5} show that all estimates of the mediating area cover the true region, which demonstrates that our approach is robust to the choice of $\tilde{N}_1 \times \tilde{N}_2 \times \tilde{N}_3$ and pooling methods in this setting.

\begin{table}[h]
	\centering
	\caption{Settings of the additional simulation.}
	\label{ta:S.1}
	\begin{tabular}{m{2cm}<{\centering}m{11cm}<{\centering}m{2cm}}
		\toprule
		Variable & Generation formula & Value\\ 
		\midrule
		$X$ 	& sample from $\mathcal{B}(1,\pi_i)$& 0 \text{or} 1\\ 
		$\bm{M}$ & $X_i*\bm{\alpha} + Z_i*\bm\Psi + \bm{\varepsilon}_i$& $\mathcal{R}^{144\times 196 \times 160}$\\
		$Y$  & $X_i \gamma + Z_i s + \langle \bm{M}_i, \bm{\beta} \rangle + \epsilon_{i}$&$\mathcal{R}$\\ 
		\midrule
		Note: & Here, $Z_i \sim \mathcal{U}(0,2)$, $\text{logit}(\pi_i) = 0.5 - 0.5Z_i$, $\bm{\Psi}$ is random with $\mathrm{vec}(\bm{\Psi}) \sim \mathcal{N}(0,I)$, $\epsilon_i \sim \mathcal{N}(0,0.1^2)$, $\delta = s = 10$, and the non-zero loci of $\bm \alpha$, $\bm{\varepsilon}_i$, and $\bm \beta$ are restricted in the same $8 \times 8 \times 8$ region. Moreover, the elements of $\bm{\varepsilon}_i$ in this $8 \times 8 \times 8$ region follow a normal distribution with mean zero and standard deviation 0.3.& \\
		\bottomrule
	\end{tabular}
\end{table}

\begin{table}[h]
	\centering
	\caption{Results of the additional simulation.}
	\label{ta:S.2}
	\begin{tabular}{m{3cm}<{\centering}m{3cm}<{\centering}m{4cm}<{\centering}m{6cm}<{\centering}}
			\toprule
			Pooling method & Order of the pooled mediator & Regions with loci having mediation effects in the the pooled mediators& Centre of these loci in the original image\\ 
			\midrule
			\makecell{\\Max pooling\\ \\ \\ Average pooling\\} & \makecell{$9 \times 12 \times 10$ \\$16 \times 16 \times 16$ \\ $18 \times 24 \times 20$ \\ $9 \times 12 \times 10$ \\$16 \times 16 \times 16$ \\ $18 \times 24 \times 20$}&
			\makecell{(5, 7, 6)\\(8, 9, 9), (9, 9, 9)\\(9  13  11), (10, 13, 11)\\(5, 7, 6)\\(8, 9, 9), (9, 9, 9)\\(9  13  11), (10,13,11)} & \makecell{(69.5, 100.3, 85.8)\\(69.1, 100.1, 85.8), (73, 101, 86.7)\\( 69.0, 100.3, 85.8), (73, 100.8, 86)\\(69.5, 100.3, 85.8)\\(69.1, 100.1, 85.8), (73, 101, 86.7)\\( 69.0, 100.3, 85.8), (73, 100.8, 86)} \\ 
			\bottomrule
	\end{tabular}
\end{table}

\section*{APPENDIX D: SUPPLEMENTARY FIGURES FOR SIMULATION STUDIES AND REAL DATA ANALYSIS}
\setcounter{figure}{0}

\renewcommand{\thefigure}{\rm{S}.\arabic{figure}}
\begin{figure}[h]
	\centering
	\includegraphics[width=1\linewidth]{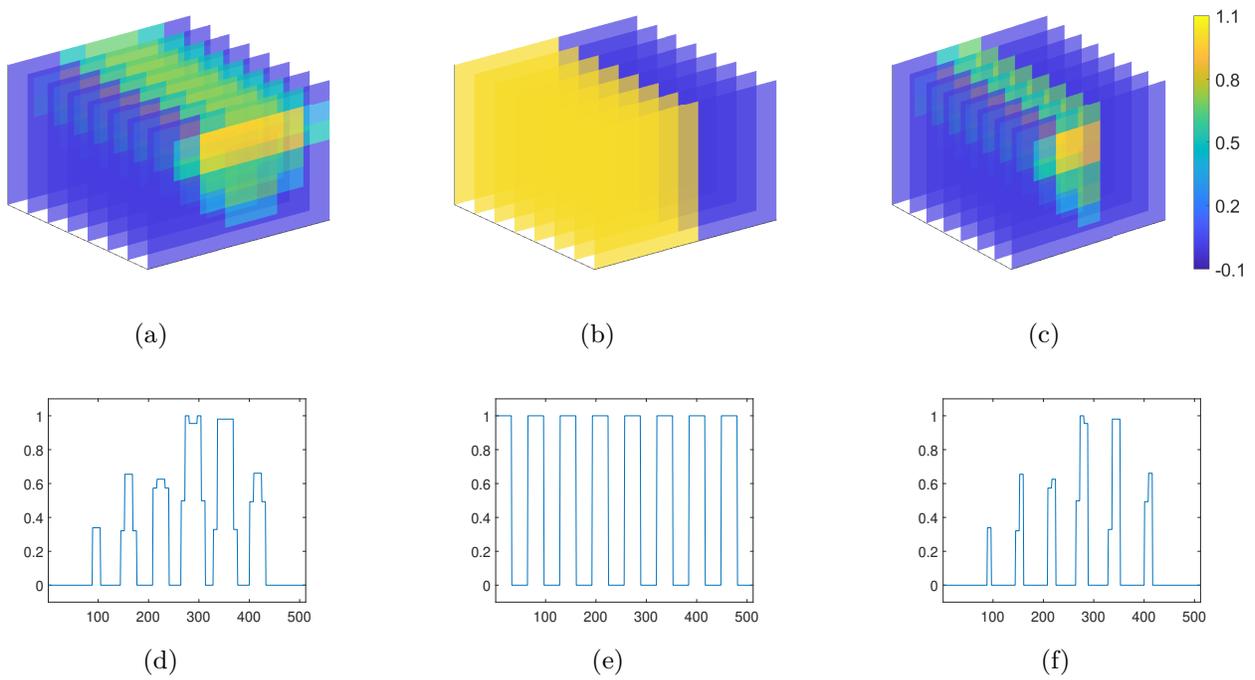}
	\caption{(a) represents the slice image of $\bm \alpha$ and $\bm \Psi$, and the non-zero part of it is a mixed-color elliptical cylinder. (b) represents the slice image of $\bm \beta$, which is generated by two cubes of different colors. (c) represents the slice image of $\bm{\alpha\beta}$. The colors represent the grid point values. (d) represents the plots of $\mathrm{vec}(\bm \alpha)$ or $\mathrm{vec}(\bm \Psi)$, and (e), (f) represent the plots of $\mathrm{vec}(\bm \beta)$ and $\mathrm{vec}(\bm{\alpha\beta})$.}
	\label{fig:S.1}
\end{figure}
\begin{figure}[htbp]
	\centering
	\includegraphics[width=1\linewidth]{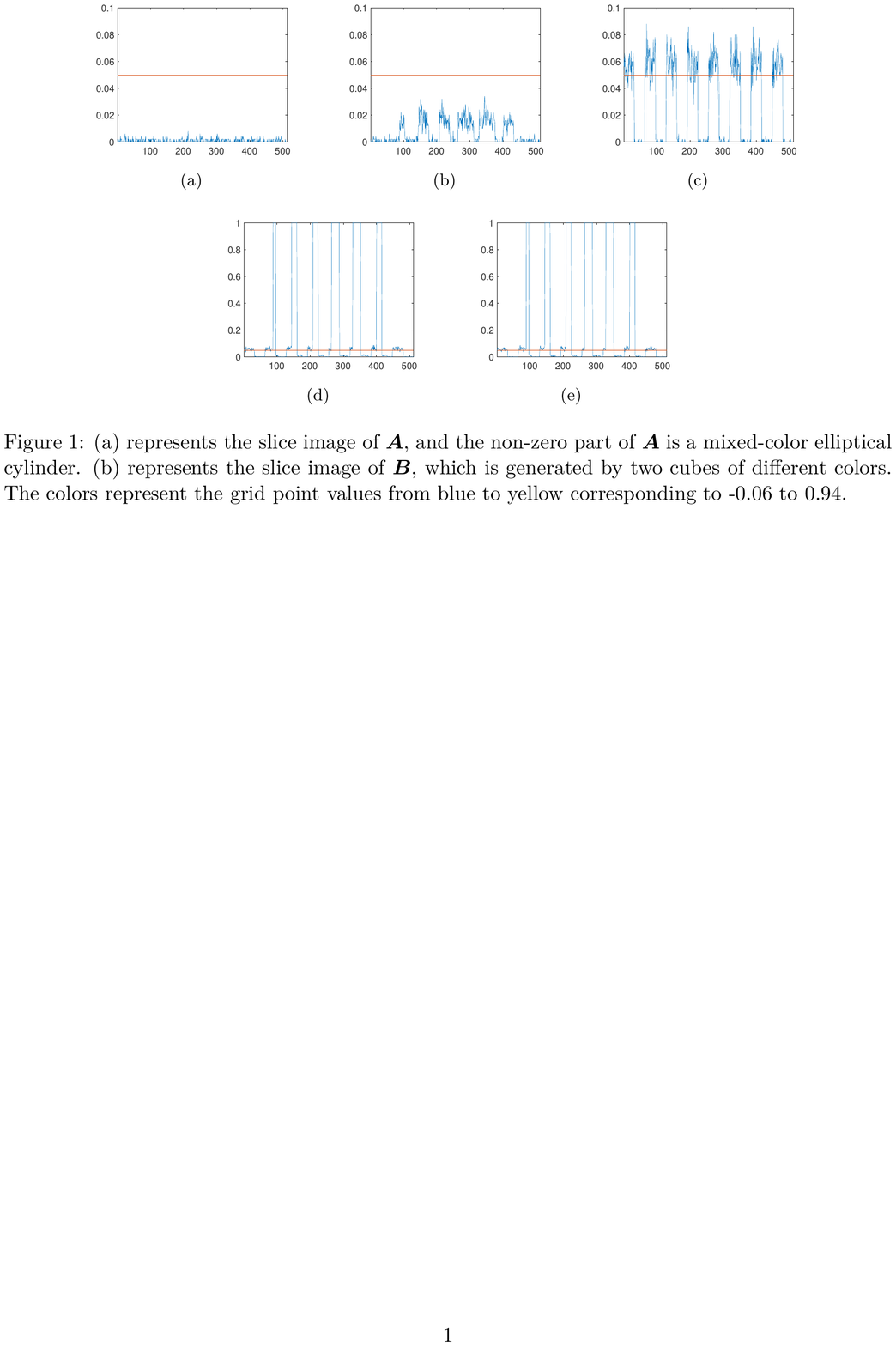}
	\caption{Results of the first three scenarios are shown in (a), (b), and (c), respectively. Each plot shows the proportion of times that $\bm{\alpha} \circ \bm{\beta}$ is statistically significant (empirical ${\it p}<0.05$) among the 500 replications for each locus. The results illustrate that the method provides adequate control of the false positive rate in all three simulated scenarios. Results of the last two scenarios are shown in (d) and (e), respectively. Each plot illustrates the power of the method in detecting true positives in the loci of $\alpha_{p_1,p_2,p_3} \beta_{p_1,p_2,p_3} \neq 0$, while appropriately controlling for false positives in the loci of $\alpha_{p_1,p_2,p_3}\beta_{p_1,p_2,p_3} = 0$. The $x$-axis of all plots represents the straightened loci.}
	\label{fig:S.2}
\end{figure}
\begin{figure}[h]
	\centering
	\includegraphics[width=1\linewidth]{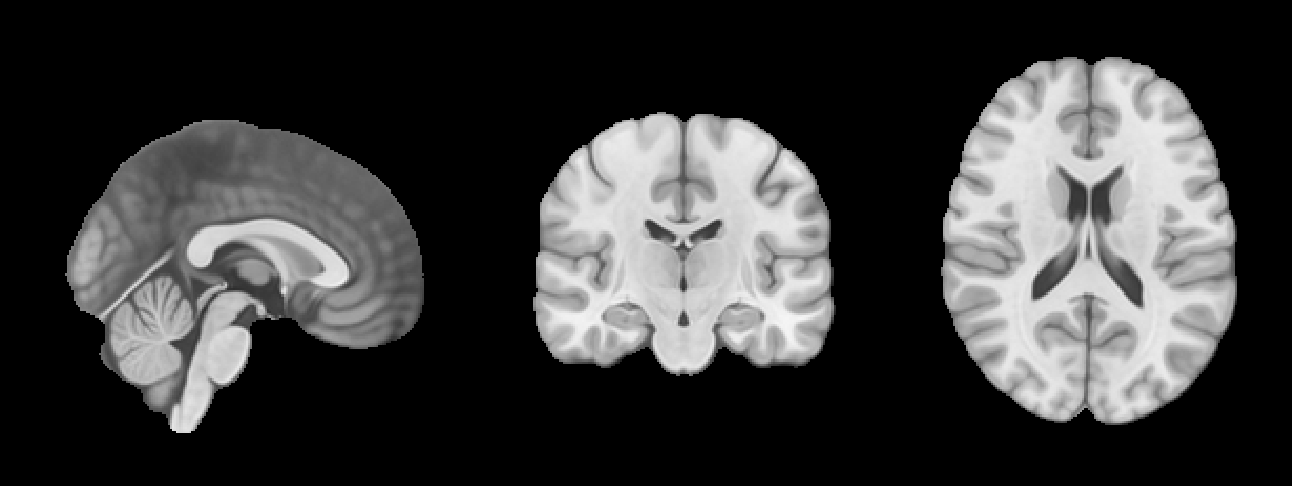}
	\caption{The MNI152 standard-space T1-weighted average structural template image.}
	\label{fig:S.3}
\end{figure}
\begin{figure}[h]
	\centering
	\includegraphics[width=1\linewidth]{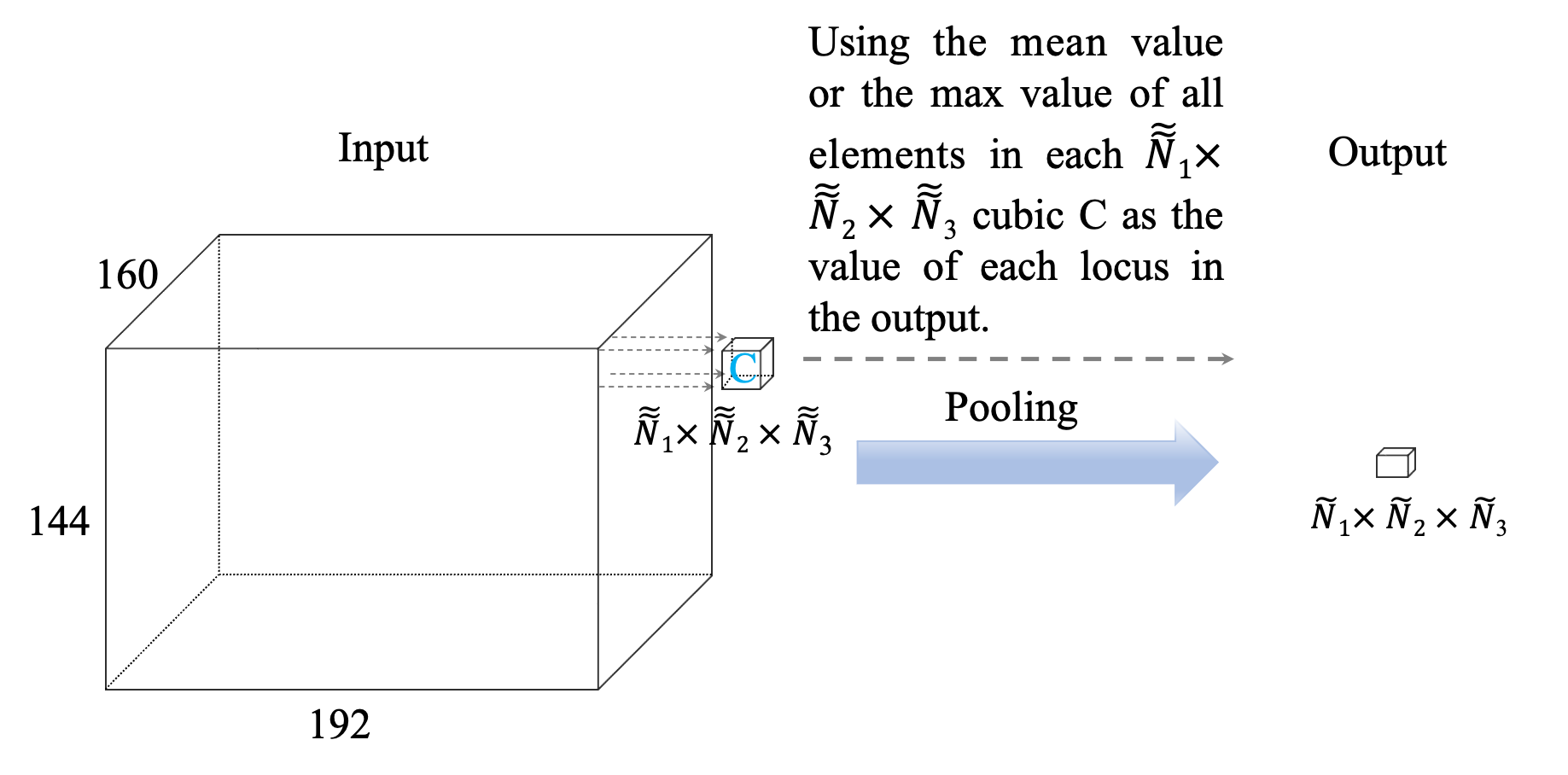}
	\caption{\colorbox{yellow}{Pooling.}
	\label{fig:S.4}}
\end{figure}
\begin{figure}[h]
	\centering
	\includegraphics[width=1\linewidth]{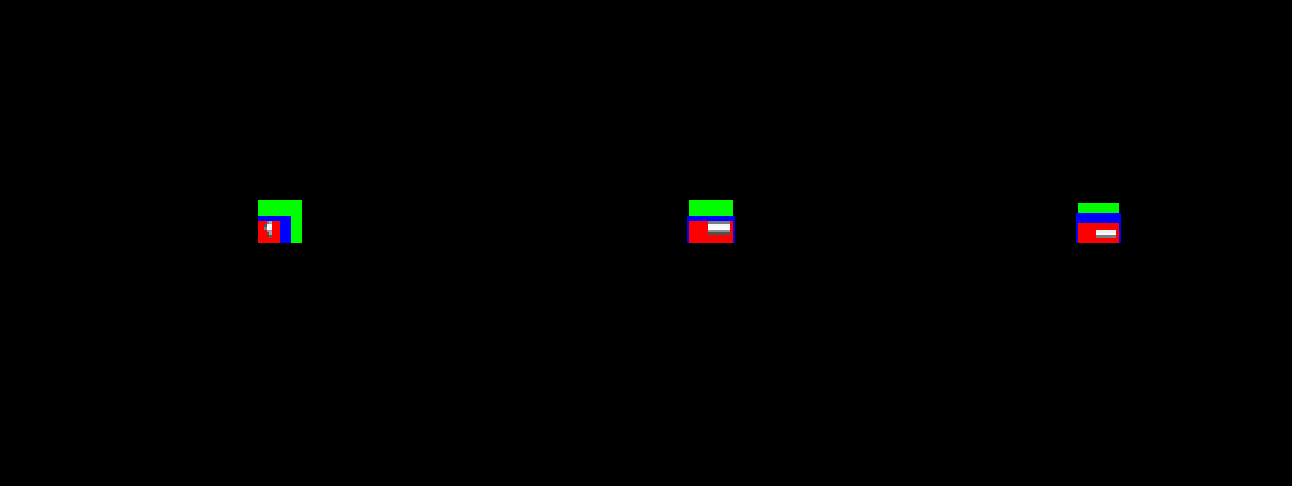}
	\caption{The true indirect effect region and its estimates. Here, the mediation analysis strategy proposed in Section \ref{sec:4} is performed. The red box is a $8\times 8 \times 8$ region, the blue box is a $9\times 12 \times 10$ region, the green box is a $16 \times 16 \times 16$ region, and the true region is gray and white.  All the regions cover the whole intermediary area, which shows the robustness of the method to the choice of $\tilde{N}_1 \times \tilde{N}_2 \times \tilde{N}_3$.}
	\label{fig:S.5}
\end{figure}

\newpage
\section*{ACKNOWLEDGEMENTS}
This work was supported by the National Natural Science Foundation of China (No.11771146, No.11831008), the Basic Research Project of Shanghai Science and Technology Commission (No.19JC1410101), and the ``Flower of Happiness'' Fund Pilot Project of East China Normal University (No.2019JK2203).

\section*{DATA AVAILABILITY STATEMENT}
The data supporting the findings of this study can be obtained by application to the Institute of Brain Science, East China Normal University. The data are not publicly available due to privacy or ethics or restrictions.

\section*{REFERENCES}
\bibliographystyle{apalike}
\begingroup
\renewcommand{\section}[2]{}
\bibliography{wenxian}
\endgroup

\newpage
\appendix
\setcounter{equation}{0}
\renewcommand{\theequation}{A.\arabic{equation}}

\end{document}